\documentclass[a4paper,cleveref,autoref]{lipics-v2021}

\usepackage{prettyref}

\newcommand{\rref}[2][]{\prettyref{#2}}
\newrefformat{sec}{Section\,\ref{#1}}
\newrefformat{app}{Appendix\,\ref{#1}}
\newrefformat{def}{Def.\,\ref{#1}}
\newrefformat{thm}{Theorem\,\ref{#1}}
\newrefformat{prop}{Proposition\,\ref{#1}}
\newrefformat{lem}{Lemma\,\ref{#1}}
\newrefformat{cor}{Corollary\,\ref{#1}}
\newrefformat{rem}{Remark\,\ref{#1}}
\newrefformat{ex}{Example\,\ref{#1}}
\newrefformat{tab}{Table\,\ref{#1}}
\newrefformat{fig}{Fig.\,\ref{#1}}
\newrefformat{case}{case\,\ref{#1}}
\newrefformat{foot}{Footnote\,\ref{#1}}

\usepackage{amsthm}

\usepackage{amsmath}
\usepackage{amssymb}
\usepackage{mathtools}
\usepackage{tensor}
\usepackage{bm}
\usepackage{tikz}
\usepackage{tikz-cd}
\usepackage{graphicx}
\usepackage{stmaryrd}
\usepackage{svg}

\usepackage{xcolor}
\usepackage{float}
\usepackage[prefixflatinterpret,prefixflatbracketmodalinterpret,prefixflatsetinterpret,sidenotecalculus,longseqcontext]{logic}
\usepackage[prefixflatinterpret,prefixflatbracketmodalinterpret,differentialdL,simplenames]{dL}

\graphicspath{ {./Images/} }

\usepackage{xspace}

\definecolor{semblue}{rgb}{0,0,0.7}
\definecolor{vgreen}{rgb}{.1,.5,0}%
\definecolor{vdarkgreen}{rgb}{.06,.3,0}%
\definecolor{vred}{rgb}{.7,0,0}%
\definecolor{vblue}{rgb}{.1,.15,.62}%
\definecolor{vgray}{rgb}{.35,.35,.35}
\definecolor{darkishgray}{rgb}{.35,.35,.35}
\definecolor{vvblue}{rgb}{.14,.21,.868}%
\renewcommand*{\irrulename}[1]{\text{\textcolor{vblue}{\upshape#1}}}%

\newcommand{\R}{\mathbb{R}}
\newcommand{\N}{\mathbb{N}}
\newcommand{\Q}{\mathbb{Q}}

\newcommand{\lra}{\textrm{LRA}}

\renewcommand{\phi}{\varphi}

\newcommand{\eps}{\varepsilon}
\newcommand{\eval}[1]{\left\llbracket #1 \right\rrbracket}
\newcommand{\norm}[1]{\left\lVert#1\right\rVert}
\newcommand{\bebecomes}{\mathrel{::=}}
\newcommand{\alternative}{~|~}

\newcommand{\folr}{\textsf{FOL}_{\R}}

\newcommand{\lip}{C^{0, 1}}
\newcommand{\norml}[1]{\norm{#1}_{L}}
\newcommand{\normi}[1]{\norm{#1}_{\infty}}
\newcommand{\normc}[1]{\norm{#1}_{C^{0, 1}}}

\newcommand{\Acc}{\mathsf{Acc}}
\newcommand{\Rad}{\mathsf{Rad}}
\newcommand{\End}{\mathsf{End}}

\newcommand{\flow}[1]{x'=Ax,t'=1{\&}t\leq #1}

\renewcommand*{\lie}[3][]
{\mathcal{L}_{#2}^{\ifthenelse{\equal{#1}{}}{}{^{\left(#1\right)}}}(#3)}

\renewcommand*{\lied}[3][]{\overset{\bm .}{#3}\ifthenelse{\equal{#1}{}}{}{{}^{(#1)}}}

\renewcommand{\Dostar}[1]{\ifthenelse{\equal{#1}{}}{(*)}{-(*)}}

\newsavebox{\Rval}%
\sbox{\Rval}{$\scriptstyle\mathbb{R}$}
\newsavebox{\backiterateb}%
\sbox{\backiterateb}{$\scriptstyle\overleftarrow{\dibox{{}^*}}$}

\newcommand{\fvarA}{\phi}
\newcommand{\fvarB}{\psi}

\newcommand{\rfvar}{P}

\newcommand{\D}[1]{#1'}

\title{Linear Axiomatization of Linear Differential Equations}

\author{Long Qian}
{Carnegie Mellon University, Pittsburgh, PA, USA}
{longq@andrew.cmu.edu}
{https://orcid.org/0000-0003-1567-3948}
{}

\author{Andr\'e Platzer}
{Karlsruhe Institute of Technology, Karlsruhe, Germany}
{platzer@kit.edu}
{https://orcid.org/0000-0001-7238-5710}
{}
\Copyright{Long Qian and Andr\'e Platzer}
\authorrunning{L. Qian and A. Platzer}
\hideLIPIcs

\keywords{Differential Equation Axiomatization, Differential Invariants, Reachability}

\ccsdesc[500]{Mathematics of computing~Ordinary differential equations}
\ccsdesc[500]{Theory of computation~Timed and hybrid models}
\ccsdesc[500]{Theory of computation~Proof Theory}
\ccsdesc[500]{Theory of computation~Modal and temporal logics}
\ccsdesc[500]{Theory of computation~Programming logic}

\acknowledgements{This work was funded by an Alexander von Humboldt Professorship.}

\begin{document}
\nolinenumbers

\maketitle

\begin{abstract}
This article presents an axiomatization of \emph{linear} differential equations that is \emph{complete} for topologically open properties with compact initial conditions and time horizons by proving them deductively via a sound reduction to formulas of linear real arithmetic. 
Indeed, linearity is preserved: \emph{linear} differential invariants are complete for open properties of \emph{linear} differential equations and reduce with a \emph{linear} number of Lie derivatives (per atom) to \emph{linear} real arithmetic.
In particular, this article shows that the linear fragment of differential dynamic logic (\dL), a modal extension of standard first-order logic over the reals satisfies these completeness properties.
By establishing completeness within decidable linear real arithmetic, the need for proving sophisticated nonlinear formulas of real arithmetic is circumvented.
Furthermore, this article establishes generalizations to the nonlinear fragment of \dL that enable the practical certification of reachability computations, validating the output of external computations in \dL with machine-checkable proofs.
\end{abstract}

\section{Introduction}
Ordinary differential equations (ODEs) are well-known for their ubiquitous applicability across many disciplines, highlighting their importance. Classically, ODEs are often studied quantitatively by analyzing numerical approximations to their solutions, leading to the development of many sophisticated and deep algorithms to compute such approximations in the field of numerical analysis. Alternatively, a seemingly orthogonal approach is to analyze ODEs qualitatively \cite{Poincare81}, where properties of their solutions are deduced directly from the differential equations. The differences in these approaches are particularly highlighted in the safety verification of cyber-physical/hybrid dynamical systems \cite{Alur_2015}. Quantitative approaches are often viewed to be more scalable for verifying numerical properties of a system \cite{DBLP:conf/cav/ChenAS13, DBLP:conf/cpsweek/Althoff15, DBLP:conf/itp/0001T24}, but the intrinsic complexity in the algorithms adopted makes it difficult to justify their correctness. In contrast, qualitative approaches via logical axiomatizations \cite{DBLP:journals/jar/Platzer08} are symbolic, with correctness being easily justified by having a small set of soundness-critical axioms, yielding machine-checkable proofs of correctness \cite{Fulton_Mitsch_Quesel_Volp_Platzer_2015, DBLP:conf/cpp/BohrerRVVP17}. 

Recent work has shown that, surprisingly, quantitative and qualitative approaches to ODEs can be integrated harmoniously \cite{compact_IVP} in \emph{differential dynamic logic} (\dL). In particular, it was established that, given any (untrusted) approximation to the solution of an ODE, if the approximation is truly accurate, then a corresponding deductive proof can be computed in the axiomatization of $\dL$ certifying its accuracy --- a completeness result for approximations. However, the complete axiomatization and algorithm suffer from the following drawbacks:
\begin{enumerate}
    \item Given a (potentially untrusted) numerical approximation to the solution of an ODE, its accuracy is certified by deductively proving differential invariants \cite{DBLP:journals/jar/Platzer17}. Although \dL is complete for differential invariants and soundly reduces them to formulas of real arithmetic \cite{DBLP:journals/jacm/PlatzerT20}, the proof for each differential invariant instance requires taking $N$ Lie derivatives, where $N$ is some finite number dependent on the ODE and the formula of invariance, and is doubly exponential in the number of variables in general \cite{DBLP:journals/jacm/PlatzerT20, differential_rank_bound}. 
    \item As \dL is a modal extension of the standard first-order logic of real arithmetic ($\folr$), deductive proofs (e.g., of differential invariants) often produce proof obligations of $\folr$ formulas, at which point Tarski's completeness theorem \cite{Tarski_1948} is utilized to conclude that the $\folr$ formula is provable (iff it is valid) using decision procedures for general real arithmetic, which also suffer from doubly exponential complexity \cite{DBLP:journals/jsc/DavenportH88}.  
    \item Due to the need of using general decision procedures of $\folr$ and the fact that most such algorithms are not proof-producing, this loses a layer of reliability in the process and only yields machine-checkable proofs of correctness modulo $\folr$ \cite{DBLP:conf/cpp/KosaianTP23}. 
\end{enumerate}

The main contribution of this article is to show that, surprisingly, by specializing to the linear case, linearity is preserved - (open) properties (on finite time horizons) definable in \emph{linear} real arithmetic ($\lra$) of \emph{linear} ODEs are provable in the \emph{linear} fragment of \dL using \emph{linear} differential invariants requiring at most a \emph{linear number} of Lie derivatives (per atom, in the dimension of the ODE), deductively reducing proofs to the provability of $\lra$ formulas. In particular, this resolves the drawbacks pointed out above, and the existence of efficient, proof-producing decision procedures for $\lra$ \cite{DBLP:conf/cade/BoutonODF09} further enables a complete proof-producing axiomatization down to linear real arithmetic. Note that the completeness results of this article do not follow from earlier results \cite{compact_IVP} by simply restricting to the linear fragment, which require the use of nonlinear differential invariants even in the linear case. Despite linear ODEs being a strict subset of the general class of polynomial ODEs, their properties in relation to computability/provability are far from trivial. Earlier works \cite{DBLP:journals/tcs/BellDJB10, DBLP:conf/icalp/ChonevOW16} already demonstrated that even (seemingly) basic questions such as decidability of hyperplane reachability are open and related to deep open problems in number theory \cite{DBLP:journals/tcs/BellDJB10, DBLP:conf/icalp/ChonevOW16}. 

While the results of this article are theoretical in nature, they have practical implications, particularly in the case of certifying over-approximations of reachable sets for linear ODEs. In the safety verification of cyber-physical systems, a frequently encountered problem is to compute an over-approximation of all states reachable from a (compact) set of initial conditions $C \subseteq\R^n$ following the linear ODE $x' = Ax$ on some time interval $[0, T]$, and the system is safe if this over-approximation is disjoint from a given set of unsafe states. In safety-critical applications, it is important to have a trustworthy verification that the computed over-approximations are truly correct beyond basic mathematical rigor of the underlying algorithms used, ideally in the form of a formal, machine-checkable proof. Previous works aiming at bridging this gap focused on formalizing particular reachability algorithms within theorem provers \cite{DBLP:journals/jar/Immler18, DBLP:conf/itp/0001T24} such that the over-approximations can be trusted. 

An alternative viewpoint offers a different approach - one can consider the problem of safety verification as ``satisfiability (SAT) solving modulo ODEs \cite{DBLP:conf/atva/EggersFH08,Gao_Kong_Clarke_2013-dreal}''. A system is safe if and only if the formula ``$x_0 \in C \land t \in [0, T] \land \phi(x_0, t) \in \text{UNSAFE}$'' is unsatisfiable (UNSAT), where $\phi(x_0, t)$ denotes the flow of the ODE at time $t$ starting at $x_0$.
Even in the field of SAT solving for propositional logic, SAT solvers are complicated and not typically verified themselves.
Instead, SAT solvers produce \emph{proofs of unsatisfiability}, which can then be checked with a formally verified proof checker. 
Tools for safety verification of ODEs are even more difficult, and yet no proof checking approach exists, because, unlike for SAT solving, it is not even clear what underlying axiomatic system to use for certifying proofs with ODEs. The results of this article precisely demonstrate that for linear ODEs, the linear fragment of \dL is fully capable of expressing such proofs, which can then be checked against any prover for \dL, such as KeYmaera X \cite{Fulton_Mitsch_Quesel_Volp_Platzer_2015} or its formalizations in Isabelle/HOL, Rocq \cite{DBLP:conf/cpp/BohrerRVVP17} or PVS \cite{DBLP:journals/corr/abs-2404-15214}. 

Finally, this article also establishes practical results beyond general completeness, proving that a concrete reachability algorithm for (homogeneous) linear ODEs \cite{DBLP:conf/hybrid/Girard05} can be carried out in (nonlinear) $\dL$ (\rref{thm:step_approx}), yielding proofs of accuracy certified by KeYmaera X (\rref{fig:2d-flowpipe}). A certification procedure that can certify (untrusted) over-approximations of reachable sets is also established (\rref{cor:step_cert}). While these results (\rref{thm:step_approx}, \rref{cor:step_cert}) require extending to the nonlinear fragment, the use of nonlinear real arithmetic is limited to establishing a local Taylor approximation, and only requires interval bounds. Hence, quantifier elimination of general nonlinear real arithmetic is not required, and neither are general completeness results of differential invariants.

\begin{figure}
    \centering
    \includegraphics[
      width=0.6\linewidth
    ]{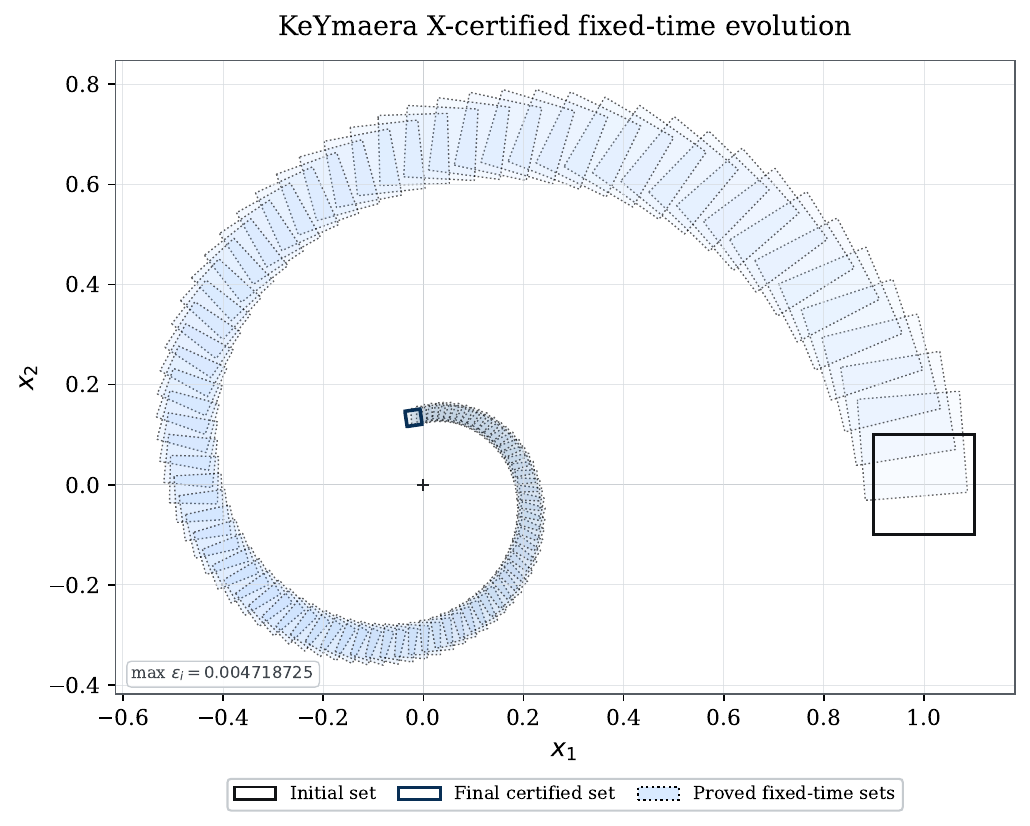}
    \caption{Certified evolution of $(x_1, x_2)$ over $t \in [0, 2]$ with provable error $\approx 4 \times 10^{-3}$ constructed via Theorem \ref{thm:step_approx}. ODE taken from reachability example \cite{DBLP:conf/hybrid/Girard05}.}
    \label{fig:2d-flowpipe}
  \end{figure}

\section{Differential Dynamic Logic}
This section provides the necessary background on differential dynamic logic (\dL) \cite{DBLP:journals/jar/Platzer08,DBLP:journals/jar/Platzer17} that is needed for the article. On a high level, \dL is a modal extension of the standard first-order logic over the reals, where modalities are annotated with (polynomial) ODEs. This article focuses on the linear, continuous fragment of \dL.

\subsection{Syntax and Semantics}
Terms in the linear fragment of \dL are built from the following grammar, where $x$ denotes a variable, and $c \in \Q$ a rational constant. Equivalently, terms are (multivariate) affine polynomials of the variables with rational coefficients.
\[p, q\bebecomes x\alternative c\alternative p + q\alternative c\cdot q\]
Similarly, $\lra$ refers to the linear fragment of the theory of real arithmetic, only allowing for constant (rational) multiplications. For full (nonlinear) $\dL$, the product of terms $p \cdot q$ is also allowed. In this article, vectorial variables are denoted as $x = (x_1, \dots, x_n)$ for presentation.
\dL formulas have the following grammar, where ${\preceq} \in \{<, \leq\}$ is a comparison relation, $A \in \Q^{n \times n}$ is a rational matrix defining the ($n$-dimensional) linear ODE $x' = Ax$ and $T \in \Q^+$. 
\begin{align*}
    \phi, \psi &\bebecomes p {\preceq} q \alternative \phi \land \psi \alternative \phi \lor \psi \alternative \neg \phi \alternative \forall x \phi \alternative \exists x \phi \alternative 
\ddiamond{\alpha}{\psi} \alternative \dbox{\alpha}{\psi}\\
\alpha &\bebecomes \cdots \alternative x' = Ax, t' = 1 \,\&\, t \leq T
\end{align*}
This article only concerns the case where $\alpha$ is a linear ODE, but \dL allows $\alpha$ to be of the more general class of hybrid programs, which are not needed here. The semantics of non-modal \dL formulas are standard and agree with their interpretations in $\folr$. The modal formulas $\dbox{\alpha}{\phi}, \ddiamond{\alpha}{\phi}$ express \emph{safety} and \emph{liveness} properties, respectively, and can be thought of as universal/existential quantifiers over time in the flow of $\alpha$. Concretely, let $\phi(x_0, t) = e^{At}x_0$ denote the flow of $x' = Ax$ with initial condition $x(0) = x_0$, then \(\dbox{x' = Ax, t' = 1 \& t \leq T}{\psi}\) expresses the property that $\psi(x(t))$ holds for all $t \in [0, T]$ and $x(t) = \phi(x_0, t)$, where $x_0$ is the value of $x$ in the current state. Likewise, \(\ddiamond{\alpha}{\psi}\) expresses the property that there exists some $t \in [0, T]$ such that $\psi(x(t))$ holds. The constraint $t \leq T$ is referred to as the \emph{domain constraint} and can more generally be any formula of real arithmetic. For a \dL formula $\phi$ in $n$-variables, $\eval{\phi} \subseteq \R^n$ denotes the set of all states that satisfy $\phi$.  A \dL formula $\phi$ is valid iff it is true in all states, i.e., \(\eval{\phi} = \R^n\). 

\subsection{Proof Calculus}
The proof calculus of \dL is a standard, classical sequent calculus with the standard proof rules for first-order reasoning. A sequent has the form $\Gamma \vdash \phi$, equivalent in semantics to the implication $\left(\bigwedge_{\psi \in \Gamma} \psi\right) \rightarrow \phi$. An axiom (schema) is said to be \emph{sound} if and only if all of its instances are valid, and a proof rule is \emph{sound} if the validity of its premises imply the validity of its conclusion. Axioms/proof rules are \emph{derivable} if they can be deductively proven in the aforementioned sequent calculus. The soundness of \dL \cite{DBLP:journals/jar/Platzer08} implies that all derivable axioms/proof rules are sound and all derivable formulas valid.

This article concerns the continuous fragment of $\dL$ under the standard axiomatization \cite{compact_IVP}. An important property of \dL is that it is complete for all \emph{differential invariants}, that is, if a $\folr$ formula $I$ characterizes a positive invariant of the ODE $x' = f(x)$, then it is provably invariant \cite{DBLP:journals/jacm/PlatzerT20, DBLP:conf/emsoft/LiuZZ11}. Furthermore, the proof of its invariance is soundly reduced to real arithmetic formulas concerning the Lie derivatives of $I$ \cite{DBLP:conf/emsoft/LiuZZ11,DBLP:journals/jacm/PlatzerT20}, hence linearity is preserved. The following theorem concerns the specialization of this completeness result to the linear case, which this article relies upon. It was implicit in earlier work \cite{DBLP:journals/jsc/GhorbalS22} and is proved here for completeness. 

\begin{theorem}[Completeness of Linear Differential Invariants \cite{DBLP:journals/jsc/GhorbalS22}]
    \label{thm:complete-invariants}
    \dL is complete for linear differential invariants. For all $A \in \Q^{n \times n}$, $\lra$ formula $I$, if the \dL formula 
    \[I \rightarrow \dbox{x' = Ax \& Q}{I}\]
    is valid, then one can effectively find a proof of it in the linear fragment of \dL by taking at most $n$ Lie derivatives of $I$. $Q$ here either denotes $t \leq T$ for some $T \in \Q^+$, in which case the vectorial ODE $x' = Ax$ includes the clock variable $t' = 1$, or $Q$ is the trivially true formula $\top$ (hence reduces to $x' = Ax$). 
\end{theorem}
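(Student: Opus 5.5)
We start from the general completeness result for differential invariants \cite{DBLP:journals/jacm/PlatzerT20,DBLP:conf/emsoft/LiuZZ11}. It says that if $I \rightarrow \dbox{x'=f(x)\&Q}{I}$ is valid for a semialgebraic $I$ (and $Q$), then there is a proof in \dL. This proof uses the axioms for differential invariants, the differential radical axiom, and real induction. It reduces validity to a real arithmetic formula built from the \emph{progress} and \emph{backward progress} formulas $\dot{I}^{(*)}$ and $\dot{Q}^{(*)}$.

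For an atom $p \succ 0$ or $p \succeq 0$, the progress formula is a disjunction over $k \le N$ of conditions of the form
\[ p=0 \land \lied[]{}{p}=0 \land \dots \land \lied[k-1]{}{p}=0 \land \lied[k]{}{p} \succ 0. \]
Here $N$ is the differential rank of $p$: the smallest $N$ such that $\lied[N]{}{p}$ lies in the ideal generated by the lower Lie derivatives. We therefore need to verify two things. First, that $N \le n$ in the linear case. Second, that every formula produced in the reduction lies in $\lra$, so that the proof stays in the linear fragment.

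\emph{Linearity.} For an affine term $p = c^\top x + d$ and the vector field $Ax$, the Lie derivative is $\lied[]{}{p} = c^\top A x$. This is again linear, with rational coefficients, and by induction $\lied[k]{}{p} = c^\top A^k x$. If the clock $t'=1$ is included, we work with the augmented system on $(x,t)$, which is still linear of the form $(x,t)' = \tilde A (x,t) + e$. An affine field is handled in the same way by homogenizing with a constant coordinate, or simply by noting that Lie derivatives of affine terms along affine fields stay affine. All atoms, their Lie derivatives, and hence the progress formulas are then $\lra$ formulas. So every arithmetic side condition produced by the reduction is in $\lra$, and can be discharged by a decision procedure for $\lra$, which is also the source of the effectiveness claim.

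\emph{Rank bound.} By Cayley–Hamilton, $A^n$ is a rational linear combination of $I, A, \dots, A^{n-1}$. Hence $\lied[n]{}{p} = \sum_{i<n} a_i \lied[i]{}{p}$ (plus a constant contribution in the affine or clock case). We establish this identity purely arithmetically, as a linear identity between terms, so no nonlinear reasoning is needed. This is the ideal-membership certificate, with constant multipliers, required by the differential radical invariant axiom. Consequently the conjunction $p=0 \land \dots \land \lied[n-1]{}{p}=0$ is provably an invariant: its derivative is implied linearly by itself, so the linear-coefficient instance of differential radical invariants (or repeated use of dI with constant cofactors) suffices. The progress disjunction can therefore be truncated at $k \le n$.

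If the clock is included, the dimension is $n+1$. We need to argue that the clock still costs at most $n$ derivatives for atoms not involving $t$, and only one derivative for $t \le T$, since $\lied[]{}{(T-t)} = -1$.

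\emph{Main obstacle.} The main difficulty is checking that the general completeness proof instantiates inside the linear fragment. Its real-induction step, and the handling of semialgebraic $I$ in normal form, must not introduce nonlinear terms. For instance, the usual proof writes $I$ in disjunctive normal form and uses multiplier polynomials in the radical axiom. We will show that in the linear case all multipliers can be chosen to be rational constants, namely the Cayley–Hamilton coefficients. We will also show that the normal-form manipulations act only on the propositional structure and therefore preserve linearity. With this, the final arithmetic obligation $I \land Q \rightarrow \dot{I}^{(*)}$ (and its backward counterpart) is an $\lra$ formula, valid by the semantic argument of the general theorem, and hence provable.
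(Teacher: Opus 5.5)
Your main line matches the paper's proof. You follow \cite[Theorem~6.9]{DBLP:journals/jacm/PlatzerT20}, note that the Lie derivatives of $c^\top x$ along $Ax$ are the linear terms $c^\top A^k x$, and use Cayley--Hamilton to get $p^{(n)}=\sum_{i<n}a_ip^{(i)}$ with rational constant multipliers. For homogeneous atoms $c^\top x\preceq 0$ this is fine, and that is the only case the paper's proof treats.

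The step that fails is your extension to affine atoms and the clock. For $p=c^\top x+d$ the identity reads $p^{(n)}=\sum_{i<n}a_ip^{(i)}+\kappa$ with $\kappa=-a_0d$. When $\kappa\neq 0$, a constant remainder is not an ideal-membership certificate. Your conclusion that $p=\dots=p^{(n-1)}=0$ is invariant is then false. For $x'=x$ and $p=x-1$ one has $\dot p=p+1$, and $x=1$ is not invariant. Counting that set as a degenerate progress case would make the atom $1-x\geq 0$ look invariant, which it is not.

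The clock is also not covered by your split into atoms without $t$ and the atom $t\leq T$. The invariant $I$ may mix $x$ and $t$, as $\psi_k$ in Lemma~\ref{lem:linear-invariant} does. Then $\dim x$ derivatives do not suffice: for $x'=x$, $t'=1$ and $p=x+t$ at $(x,t)=(-1,1)$ one has $p=\dot p=0$ but $\ddot p=-1$.

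The repair is to homogenize:
\begin{itemize}
\item Let $m$ be the number of state variables including the clock.
\item Append a coordinate with value $1$ and derivative $0$, so the field becomes linear on $\R^{m+1}$.
\item Let $r$ be the dimension of the Krylov space spanned by the coefficient vectors of $p,\dot p,\ddot p,\dots$.
\end{itemize}
If $r\leq m$, you get an exact relation $p^{(r)}=\sum_{i<r}\beta_ip^{(i)}$ with rational $\beta_i$. That is the constant-multiplier certificate you wanted.

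If $r=m+1$, then $p,\dots,p^{(m)}$ span all affine functions, so both $1$ and $p^{(m+1)}$ are rational linear combinations of them. Hence $\bigwedge_{i\leq m}p^{(i)}=0$ is $\lra$-provably unsatisfiable, and the degenerate case is vacuous. Only the signs of $p,\dots,p^{(m)}$ enter the progress formula.

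Either way you get the bound of $m$ Lie derivatives with linearity preserved, where $m$ counts $t$. This is how the statement has to be read when the clock is present.
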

The provability parts of Theorem \ref{thm:complete-invariants} follow directly from the proof of completeness of semi-algebraic invariants \cite[Theorem~6.9]{DBLP:journals/jacm/PlatzerT20}. The following provides the arguments needed for the linearity claims for completeness. 
\begin{proof}
    It suffices to follow the proof of general completeness \cite[Theorem~6.9]{DBLP:journals/jacm/PlatzerT20} and verify that linearity is preserved. First note that as $I$ is $\lra$ definable, we may assume it is of the form $I(x) \equiv \alpha^T x \preceq 0$ for some $\alpha \in \Q^n$ ($\alpha^Tx$ is then the dot product) and ${\preceq} \in \{<, \leq\}$, as a general Boolean combination of these relations can be handled similarly \footnote{As local progression formulas justifying invariance distribute across logical connectives \cite[Lemma~6.5]{DBLP:journals/jacm/PlatzerT20}.}. On a high level, the proof of completeness for differential invariants proceeds as follows \cite{DBLP:conf/emsoft/LiuZZ11}:
    \begin{itemize}
        \item Compute the chain of Lie derivatives of $\alpha^Tx$ with respect to the vector field $x' = Ax$:
        \[\alpha^Tx, (\alpha^Tx)' = \alpha^TAx, (\alpha^Tx)'' = \alpha^TA^2x, \dots\]
        As each term is a polynomial in $x$ and $\R[x]$ is Noetherian, there must be some finite index $N$ such that the ideal generated by these polynomials stabilizes at the $N$-th derivative.
        \item A $\lra$ formula is then constructed using the first $N$ Lie derivatives, such that the validity of this formula is provably equivalent to the invariance of $I$.
    \end{itemize}
    It can first be observed that linearity is indeed preserved, as Lie derivatives of $\alpha^Tx$ are always linear in $x$ (recall that $A$ is not symbolic but rational). Secondly, note that by Cayley-Hamilton, $A^n$ is a linear combination of $\{I, A, \dots, A^{n - 1}\}$, therefore the sequence of Lie derivatives stabilizes at $\alpha^TA^nx$, completing the proof. 
\end{proof}
A complete record of (the fragment of) $\dL$'s axiomatization used for this article is provided in \rref{app:axioms} for reference. 

\section{Linear Axiomatization}
\label{section: linear axiomatization}
This section establishes the completeness of linear ODEs in the linear fragment of \dL, in particular, any sequence of \emph{admissible} approximations converging to the solution of the ODE converges provably. In the rest of this section, we work in the linear fragment of $\dL$. Let $x' = Ax$ be a linear ODE with $A \in \Q^{n \times n}$, $C(x) \in \lra$ be a set of initial states definable in linear arithmetic that is topologically compact (i.e. $\eval{C} \subseteq \R^n$ is compact), and $[0, T]$ a positive time interval with $T \in \Q^+$. We write $\phi(x_0, t) \coloneqq e^{At}x_0$ to denote the flow of the ODE with initial condition $x(0) = x_0$ because the general solution for linear ODEs has this form with the matrix exponential $e^{At}$. The following class of approximations can be viewed as the linearized version of local definable approximations \cite[Definition~4.8]{compact_IVP}, representing numerical approximations. 

\begin{definition}[Admissible Approximations]
    \label{def:admissible-approx}
    Let $(\Phi_k(x_0, t))_k \in C^0(\eval{C} \times [0, T], \R^n)$ be a sequence of continuous functions. This sequence is said to be an \emph{admissible approximation} if:
    \begin{itemize}
        \item For every $k$, the function $\Phi_k(x_0, t)$ is definable in $\lra$. 
        \item The sequence $\Phi_k(x_0, t)$ converges to $\phi(x_0, t)$ in $\lip([0, T], \R^n)$ uniformly in $x_0$. That is, the following sequence converges to $0$
        \[\sup_{x_0 \in \eval{C}} \normc{\Phi_k(x_0, t) - \phi(x_0, t)} \xlongrightarrow{k} 0\]
    \end{itemize}
    where $C^{0, 1}([0, T])$ denotes the space of Lipschitz continuous functions endowed with the norm $\normc{f} = \normi{f} + \norml{f}$, $\normi{f} = \sup_{t \in [0, T]} \abs{f(t)}$ and $\norml{f} = \sup_{x \neq y \in [0, T]} \frac{\normi{f(x) - f(y)}}{\normi{x - y}}$ is the least Lipschitz constant. 
\end{definition}

Intuitively, such admissible approximations represent the numerical approximations to the true flow $\phi(x_0, t)$. The only requirements are that they are $\lra$ definable (e.g. obtained by interpolation) and they converge to the true solution. Importantly, there are no \emph{provability} requirements. The main theorem of this section states that the convergence of such approximations can be deductively proven in \dL.

\begin{theorem}[Linear Completeness of Convergence]
    \label{thm:complete-convergence}
    Let $(\Phi_k(x_0, t))_k$ be a sequence of admissible approximations and $\eps \in \Q^+$ be arbitrary. The following formula is provable in the linear fragment of $\dL$ for all sufficiently large $k$
    \[C(x) \land x = x_0 \land t = 0 \rightarrow \dbox{x' = Ax, t' = 1 \& t \leq T}{\normi{x - \Phi_k(x_0, t)} < \eps}\]
    Note that for vectors $\normi{\cdot}$ denotes the infinity-norm (i.e. maximum of absolute values of coordinates), thus the formula $\normi{x - \Phi_k(x_0, t)} < \eps$ belongs to the linear fragment of \dL (recall that $\Phi_k$ is required to be $\lra$ definable). 
\end{theorem}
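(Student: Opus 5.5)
The plan is to reduce the claim to a single instance of \rref{thm:complete-invariants}. I would pick a linear-arithmetic invariant that ties the true state $x$ to the approximation $\Phi_k(x_0,t)$ through a piecewise-linear, Grönwall-style error envelope $b_k(t)$. The invariant must be valid semantically; its provability then comes for free.

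\emph{Step 1 (defect estimate).} Write $\eta_k \coloneqq \sup_{x_0\in\eval{C}}\normc{\Phi_k(x_0,\cdot)-\phi(x_0,\cdot)}$, so $\eta_k\to 0$ by \rref{def:admissible-approx}. Since $\phi$ is smooth with $\phi' = A\phi$, the Lipschitz function $\Phi_k$ has almost-everywhere derivative satisfying
\[\normi{\Phi_k' - A\Phi_k} \le \normi{(\Phi_k-\phi)'} + \normi{A}\,\normi{\phi-\Phi_k} \le (1+\normi{A})\,\eta_k \eqqcolon \delta_k .\]
Set $L\coloneqq\normi{A}$, taken as a rational upper bound on the induced norm. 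Then the error $e(t) = x(t)-\Phi_k(x_0,t)$ is Lipschitz and satisfies $\normi{e'}\le L\normi{e}+\delta_k$ almost everywhere. It starts with $\normi{e(0)}\le\eta_k$, because $x(0)=x_0=\phi(x_0,0)$.

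\emph{Step 2 (linear envelope).} The exponential bound $e^{Lt}$ is not linear, so I would replace it by a piecewise-linear majorant. Partition $[0,T]$ into rational points $0=t_0<\dots<t_m=T$ with mesh $h$ satisfying $Lh\le 1/2$. Define $b_k$ to be continuous and linear on each $[t_i,t_{i+1}]$, with slope $\rho_i$ and value $\beta_i=b_k(t_i)$. Choose the slopes so that $\rho_i > L(\beta_i+\rho_i h)+\delta_k$, for instance $\rho_i = 2(L\beta_i+\delta_k)+\eta_k$, starting from $\beta_0 = 2\eta_k$. Then $b_k' > L b_k+\delta_k$ holds strictly on every piece, and $b_k(T)$ is a fixed rational-linear expression in $\eta_k$, bounded by $c_{L,T}\,\eta_k$. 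Hence $b_k(T)<\eps$ for all sufficiently large $k$; $\eta_k$ itself need not be rational, only rational upper bounds for it. The candidate invariant is
\[I_k \;\equiv\; \big(t\le T\rightarrow \normi{x-\Phi_k(x_0,t)} \le b_k(t)\big),\]
which is $\lra$-definable, since $\Phi_k$ and $b_k$ are. By quantifier elimination in $\lra$ it is quantifier-free linear. The initial values $x_0$ are constants of the ODE; they are not evolved.

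\emph{Step 3 (semantic invariance, the main obstacle).} I must show that $I_k\rightarrow\dbox{x'=Ax,t'=1\&t\le T}{I_k}$ is valid from \emph{every} state, not only from $C$. States with $x_0\notin\eval{C}$ are a problem, because there $\Phi_k$ carries no convergence guarantee. I would handle this by putting $C(x_0)$ into the antecedent of $I_k$ as a constant side condition. The real work is a comparison argument for the nonsmooth function $\normi{e}$ against $b_k$, from an arbitrary start time $s$ with $\normi{e(s)}\le b_k(s)$. Take the first time $\tau>s$ at which $\normi{e(\tau)} > b_k(\tau)$. Just after the last touching time, the strict inequality $b_k'>Lb_k+\delta_k\ge \normi{e}'$ (in the upper Dini sense) yields a contradiction. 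This is the step where strictness of the slopes and the almost-everywhere derivative bound must be handled carefully, including the kinks of $b_k$ at the $t_i$.

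\emph{Step 4 (assembly).} Having established validity, \rref{thm:complete-invariants} (with $Q\equiv t\le T$) gives a proof of $I_k\rightarrow\dbox{\cdot}{I_k}$ in the linear fragment, using at most $n+1$ Lie derivatives per atom. The antecedent $C(x)\land x=x_0\land t=0$ implies $I_k$ in $\lra$, because $\normi{x_0-\Phi_k(x_0,0)}\le\eta_k<b_k(0)$. The postcondition follows by monotonicity, using the $\lra$-valid implication $t\le T\land I_k\rightarrow\normi{x-\Phi_k(x_0,t)}<\eps$, which holds since $b_k\le b_k(T)<\eps$.
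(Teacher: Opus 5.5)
Your proposal is correct in substance, but it follows a different route from the paper.

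\textbf{How the paper argues.} The paper never writes an explicit envelope in $t$ for the error. In \rref{lem:linear-invariant} it encodes the Gr\"onwall growth through linear ghost dynamics $g'=Lg$, $v'=Lv+g$, so that the invariant $\normi{\Phi_k(x_0,t)-x}\le h(2g+v-1)$ is linear in the ghosts. Its validity is checked via the semigroup identity for $e^{At}$, using only sup and Lipschitz bounds on $R_k=\Phi_k-\phi$; no derivative of $\Phi_k$ is ever taken. Along the flow from $g=1$, $v=0$ one has $g=e^{Lt}$ and $v=te^{Lt}$, which are not $\lra$-definable. These are bounded separately in \rref{lem:exp-bound} by piecewise-linear Darboux-style majorants, and the pieces are combined by modal reasoning. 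That makes three invariant proofs in total.

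\textbf{How you argue, and the trade-off.} You fold both lemmas into one piecewise-linear envelope $b_k(t)$ with $b_k'>Lb_k+\delta_k$ on each piece. You then prove a single invariant by a Dini/Gr\"onwall comparison for the Lipschitz error $x-\Phi_k(x_0,t)$. This costs you the almost-everywhere defect estimate of Step~1. In exchange it needs no ghost variables and only one appeal to \rref{thm:complete-invariants}. The paper's decomposition is more modular: its exponential bounds do not depend on $\Phi_k$ and can be made arbitrarily tight, giving a growth factor of about $(2+T)e^{LT}$. Your fixed coarse mesh gives a cruder factor of order $e^{2LT}$. That is harmless for completeness, since $b_k(T)$ is still linear in $\eta_k$ and hence tends to $0$.

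\textbf{One repair needed.} $I_k$ must contain $t\ge 0$ as a conjunct, and $C(x_0)$ should also be a conjunct, exactly as in the paper's $\psi_k$. Outside $[0,T]$ neither $\Phi_k$ nor $b_k$ has any meaning. Depending on the $\lra$ encoding of $\normi{x-\Phi_k(x_0,t)}\le b_k(t)$, a state with $t<0$ and arbitrary $x$ may satisfy $I_k$ vacuously. It can then violate $I_k$ once the clock enters $[0,T]$, so $I_k$ as you wrote it need not be a valid invariant. Likewise, your final implication $t\le T\land I_k\rightarrow\normi{x-\Phi_k(x_0,t)}<\eps$ is only $\lra$-valid once $t\ge 0$ and $C(x_0)$ are available; with $C(x_0)$ merely in the antecedent, it fails for $x_0\notin\eval{C}$. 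With these conjuncts added, your Steps~3 and~4 go through as stated.
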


Theorem \ref{thm:complete-convergence} establishes that given any sequence of admissible approximations, corresponding proofs of $C^0$ convergence can be generated, entirely within \dL. It is important to observe that this theorem only places \emph{semantic} requirements ($\lip$ convergence) on the approximations, and guarantees a \emph{syntactic} proof of accuracy. The following lemma is a key step in the proof of Theorem \ref{thm:complete-convergence}, reducing the problem down to \emph{linear} differential invariants. 

\begin{lemma}[Linear Differential Invariant]
    \label{lem:linear-invariant}
    Let $(\Phi_k(x_0, t))_k$ be a sequence of admissible approximations, $\normi{A} \leq L \in \Q^+$ an upper bound on the (infinity) norm of $A$ and $h \in \Q^+$ an arbitrary constant. Then the following is a valid (linear) differential invariant for all sufficiently large $k$. 
    \[
        \psi_k(x_0, x, g, v, t) \rightarrow
         \dbox{x' = Ax, g' = Lg, v' = Lv + g, t' = 1 \& t \leq T} \psi_k(x_0, x, g, v, t)
    \]
    where the invariant $\psi_k(x_0, x, g, v, t)$ is defined via 
    \begin{align*}
        \eps(g, v) &\equiv h(2g + v - 1)\\
        \psi_k(x_0, x, g, v, t) &\equiv 
        t \geq 0 \land g \geq 1 \land v \geq 0 \land C(x_0) \land \normi{\Phi_k(x_0, t) - x} \leq \eps(g, v)
    \end{align*}
\end{lemma}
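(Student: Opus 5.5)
The plan is to establish \emph{semantic} validity of the displayed formula for all sufficiently large $k$ by an explicit error estimate along solutions. Since $\psi_k$ is $\lra$ definable and the ODE is linear, Theorem~\ref{thm:complete-invariants} then turns validity into a proof in the linear fragment of \dL. First, fix notation from the admissibility assumption (\rref{def:admissible-approx}). Let
\[\delta_k \coloneqq \sup_{x_0\in\eval{C}}\normi{\Phi_k(x_0,\cdot)-\phi(x_0,\cdot)}, \qquad \eta_k \coloneqq \sup_{x_0\in\eval{C}}\norml{\Phi_k(x_0,\cdot)-\phi(x_0,\cdot)}.\]
Both tend to $0$, so I choose $k$ large enough that $\delta_k\le h$ and $\eta_k\le h$. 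I will use that $\normi{A}$ is the operator norm induced by $\normi{\cdot}$, so $\normi{e^{As}}\le e^{Ls}$ and $\normi{e^{As}-I}\le e^{Ls}-1$ for $s\ge 0$.

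Now take a state satisfying $\psi_k$ at time $t_0$, with values $x(t_0)$, $g_0\ge1$, $v_0\ge0$, and follow the flow for $s=t-t_0\ge0$ while $t\le T$. The explicit solutions are $g(t)=g_0e^{Ls}$ and $v(t)=e^{Ls}(v_0+g_0 s)$. Hence $t\ge0$, $g\ge1$ and $v\ge0$ are preserved, and $C(x_0)$ is preserved trivially because $x_0$ is constant. Writing $\eps_0=\eps(g_0,v_0)$, a direct computation gives
\[\eps(g(t),v(t)) = e^{Ls}\eps_0 + h(e^{Ls}-1) + h g_0 s e^{Ls}.\]

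For the main conjunct, let $e(t)=\Phi_k(x_0,t)-x(t)$ and $d(t)=\phi(x_0,t)-x(t)$. Since both $\phi(x_0,\cdot)$ and $x(\cdot)$ solve $x'=Ax$, we have $d(t)=e^{As}d(t_0)$. Moreover $\normi{d(t_0)}\le \normi{e(t_0)}+\delta_k\le \eps_0+\delta_k$. Split
\[e(t)-e(t_0) = \big[(\Phi_k-\phi)(x_0,t)-(\Phi_k-\phi)(x_0,t_0)\big] + \big[d(t)-d(t_0)\big].\]
The first bracket is bounded by $\eta_k s$ through the Lipschitz part of the $\lip$ convergence. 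The second is bounded by $(e^{Ls}-1)(\eps_0+\delta_k)$. Thus
\[\normi{e(t)} \le e^{Ls}\eps_0 + \eta_k s + (e^{Ls}-1)\delta_k \le e^{Ls}\eps_0 + h g_0 s e^{Ls} + h(e^{Ls}-1) = \eps(g(t),v(t)),\]
using $\eta_k\le h\le hg_0\le hg_0e^{Ls}$ and $\delta_k\le h$. This shows that $\psi_k$ is a (semantic) invariant on $t\le T$.

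The main obstacle, and the reason for the Lipschitz requirement, is this last estimate. The naive bound $\normi{e(t)}\le \delta_k+e^{Ls}(\eps_0+\delta_k)$ incurs an additive $2\delta_k$ at $s=0^+$, which the slack $h(e^{Ls}-1)+hg_0se^{Ls}=O(s)$ cannot absorb. One must therefore bound the \emph{increment} of $\Phi_k-\phi$, which is exactly $\eta_k s$, rather than its size. With validity established, the formula is an $\lra$ invariant of a linear ODE in the variables $(x,g,v,t)$ with clock $t$ and domain $t\le T$. Theorem~\ref{thm:complete-invariants} therefore yields an effective proof in the linear fragment of \dL for all such sufficiently large $k$.
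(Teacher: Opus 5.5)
Your proof is correct and matches the paper's argument. Your split through $d(t)=e^{As}d(t_0)$ rearranges to the same identity $e(t)=e^{As}e(t_0)+(R_k(x_0,t)-R_k(x_0,t_0))+(I-e^{As})R_k(x_0,t_0)$, where $R_k=\Phi_k-\phi$, which the paper derives from the semigroup property; you then bound it the same way, using sup-error and Lipschitz-error both at most $h$ and absorbing $hs\le hg_0se^{Ls}$ via $g_0\ge 1$.
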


\rref{lem:linear-invariant} can intuitively be understood as a characterization of the ``continuous dependence on initial conditions'' property of ODE flows. In particular, the \dL formula $\psi_k(x_0, x, g, v, t)$ captures the notion that $\Phi_k$ is ``$\eps(g, v)$-close'' to the true flow $\phi(x_0, t)$. While this may appear similar in structure to an analogous result in the nonlinear case \cite[Lemma~4.11]{compact_IVP}, an important difference is that the cited result uses nonlinear invariants (even when $\Phi_k$ is $\lra$ definable), and is therefore not expressible in the linear fragment of $\dL$ and cannot establish such completeness properties. 

\begin{proof}[Proof of Lemma \ref{lem:linear-invariant}]
The validity of this differential invariant can be established directly.
Suppose that $\psi_k(x_0, x, g, v, t)$ is satisfied at some initial state $(x_0, x, g, v, t) = (y_0, y, g_0, v_0, t_0)$, thus the following conditions hold:
\begin{align*}
    \psi_k(y_0, y, g_0, v_0, t_0) &~\equiv~
    t_0 \geq 0 \land g_0 \geq 1 \land v_0 \geq 0 \land C(y_0) \land \normi{y - \Phi_k(y_0, t_0)} \leq \eps(g_0, v_0)
\end{align*}
As the domain constraint requires $t \leq T$, assume without loss of generality that $t_0 \leq T$ holds as well. Recall that $\phi(x_0, t) = e^{At}x_0$ denotes the flow of $x' = Ax$. Let $\phi_g(g_0, t) = e^{Lt}g_0$ denote the flow of $g' = Lg$, and $\phi_v(v_0, t) = (v_0 + g_0t)e^{Lt}$ the flow of $v' = g + Lv$. By definition of flows, at time $t_1 \in [t_0, T]$ the variables evolve to $(y, g_0, v_0) \mapsto (\phi(y, t_1 - t_0), \phi_g(g_0, t_1 - t_0), \phi_v(v_0, t_1 - t_0))$. Due to the monotonicity of $t' = 1$ and $y_0$ being constant across the flow, it suffices to show that the following inequality holds for all $t_1 \in [t_0, T]$
\[\normi{\Phi_k(y_0, t_1) - \phi(y, t_1 - t_0)} \leq \eps(\phi_g(g_0, t_1 - t_0), \phi_v(v_0, t_1 - t_0))\]
which implies the validity of the differential invariant. The remainder of the proof then focuses on bounding {$\normi{\Phi_k(y_0, t_1) - \phi(y, t_1 - t_0)}$}. Let $R_k(x_0, t) \coloneqq \Phi_k(x_0, t) - \phi(x_0, t)$ denote the difference between the approximation and the true flow. Direct manipulations via the semigroup property of flows then give
\begin{align*}
    &\Phi_k(y_0, t_1) - \phi(y, t_1 - t_0) = \\
    &e^{A(t_1 - t_0)}(\Phi_k(y_0, t_0) - y) + (R_k(y_0, t_1) - R_k(y_0, t_0)) + (I - e^{A(t_1 - t_0)})R_k(y_0, t_0)
\end{align*}

Recall that $\Phi_k(x_0, t)$ converges to $\phi(x_0, t)$ in the $\lip$ norm, so we may take $k$ to be large enough such that $\normi{R_k(y_0, t_0)} \leq h$ and $\normi{R_k(y_0, t_0) - R_k(y_0, t_1)} \leq h \abs{t_1 - t_0}$, substituting this and $\normi{\Phi_k(y_0, t_0) - y} \leq \eps(g_0, v_0)$ into the bound above yields
\begin{align*}
    &\normi{\Phi_k(y_0, t_1) - \phi(y, t_1 - t_0)}\\
    \leq {}& \eps(g_0, v_0)e^{\normi{A}(t_1 - t_0)} + h\abs{t_1 - t_0} + h(e^{\normi{A}(t_1 - t_0)} - 1)\\
    = {}& (\eps(g_0, v_0) + h)e^{\normi{A}(t_1 - t_0)} + h\abs{t_1 - t_0} - h\\
    \leq {}& h(2g_0 + v_0)e^{L(t_1 - t_0)} + h\abs{t_1 - t_0} - h\\
    \leq {}& h(2g_0e^{L(t_1 - t_0)} + (v_0 + g_0(t_1 - t_0))e^{L(t_1 - t_0)} - 1) + h\abs{t_1 - t_0} - h(t_1 - t_0)\\
    = {}& \eps(\phi_g(g_0, t_1 - t_0), \phi_v(v_0, t_1 - t_0))
\end{align*}
which is exactly what we wanted, establishing the validity of the invariant. It is worth pointing out that from the computations above, given $h \in \Q^+$ as a parameter, choosing (any) $k$ large enough such that $\sup_{x_0 \in \eval{C}} \normc{\Phi_k(x_0, t) - \phi(x_0, t)} \leq h$ will be sufficient to witness validity of $\psi_k$. 
\end{proof}
A consequence of Lemma \ref{lem:linear-invariant} is that if the bound $\normi{\Phi_k(x_0, 0) - x_0} \leq \eps(1, 0) = h$ is satisfied, then this bound holds on all of $[0, T]$ following the respective (linear) ODEs. Thus, this implies that $\Phi_k(x_0, t)$ approximates the true flow $\phi(x_0, t)$ on all of $\eval{C} \times [0, T]$ with accuracy 
\[\eps(\phi_g(1, t), \phi_v(0, t)) = h(2e^{Lt} + te^{Lt} - 1) \leq h(2 + T)e^{LT}\]
which converges to $0$ as $h \to 0$ by taking $k \to \infty$. Consequently, Theorem \ref{thm:complete-convergence} can be established by taking $k$ sufficiently large such that $h(2 + T)e^{LT} < \eps$. A challenge in this argument is that the functions $e^{Lt}$ (encoded by $g' = Lg$) and $te^{Lt}$ (encoded by $v' = g + Lv$) are \emph{not} directly expressible in the linear arithmetic fragment of $\dL$, nor in polynomial arithmetic, so one has to justify that arbitrarily tight upper bounds can be provably established.
\begin{lemma}[Exponential Bounds]
    \label{lem:exp-bound}
    Let $L, M_1, M_2, T \in \Q^+$ be constants such that the inequalities $e^{LT} < M_1, Te^{LT} < M_2$ hold, then the following formula is provable in the linear fragment of \dL
    \[
        g = 1 \land v = 0 \land t = 0 \rightarrow
        \dbox{g' = Lg, v' = Lv + g, t' = 1 \& t \leq T}{(g < M_1 \land v < M_2)}
    \]
\end{lemma}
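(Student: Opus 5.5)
We would prove \rref{lem:exp-bound} by constructing piecewise-linear upper envelopes for $g$ and $v$ as functions of $t$. Each envelope will be proved with a linear differential invariant, and the pieces will be chained across a partition of $[0,T]$.

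\textbf{Step 1: partition and linear envelopes.} Fix $N\in\N$ and set $\delta = T/N\in\Q^+$ and $t_j = j\delta$. On each slab $t\in[t_j,t_{j+1}]$ we look for rational bounds $G_j, V_j$ with $g\le G_j$ and $v\le V_j$ at $t=t_j$. Over the slab we aim to establish
\[ g \le G_j(1+\tilde L(t-t_j)),\qquad v \le V_j + (\tilde L V_j+\tilde G_j)(t-t_j), \]
where $\tilde L$ is slightly larger than $L$ and $\tilde G_j$ is an upper bound for $g$ on the slab.

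A cleaner route, which we will commit to, is the following. On the slab, bound $g\le \bar G_j := G_j\rho$ and $v\le \bar V_j := V_j\rho + \bar G_j\delta\rho$, where $\rho\in\Q$ satisfies $e^{L\delta}<\rho$. Then the right-hand sides of the ODE satisfy $g'\le L\bar G_j$ and $v'\le L\bar V_j+\bar G_j$. This gives affine-in-$t$ upper bounds $g\le G_j + L\bar G_j(t-t_j)$ and $v\le V_j + (L\bar V_j + \bar G_j)(t-t_j)$.

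These bounds are justified as linear invariants of the form
\[ g\le \bar G_j \land v \le \bar V_j \land g \le G_j+L\bar G_j(t-t_j)\land v\le V_j+(L\bar V_j+\bar G_j)(t-t_j) \]
inside the domain $t_j\le t\le t_{j+1}$. The invariant is self-consistent provided $G_j+L\bar G_j\delta\le \bar G_j$, i.e.\ $1+L\rho\delta\le\rho$, and similarly for $v$. Both hold for small $\delta$ once $\rho$ is chosen with $\rho>1/(1-L\delta)$.

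Validity of each such $\lra$ invariant (together with $g,v\ge 0$, itself a linear invariant) is semantically clear, because the affine functions dominate the true solutions on the slab. By \rref{thm:complete-invariants} each invariant is therefore provable in the linear fragment of \dL. Alternatively, each can be proved directly by differential invariants (dI) with domain refinement via differential cuts (dC): first cut $g\le\bar G_j$ and $v\le \bar V_j$, then derive the affine bounds by dI, whose Lie derivatives are linear.

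\textbf{Step 2: chaining.} At $t=t_{j+1}$ we set
\[ G_{j+1}=G_j(1+L\bar G_j\delta/G_j) = G_j(1+L\rho\delta),\qquad V_{j+1}=V_j+(L\bar V_j+\bar G_j)\delta. \]
The box modality over $[0,T]$ is decomposed into $N$ slabs using the standard derived reasoning for splitting time domains: differential cuts on $t\le t_j \lor \dots$, or the sequential decomposition of ODE evolutions with a $t$-indexed invariant $\bigwedge_j (t_j\le t\le t_{j+1}\rightarrow \text{bounds}_j)$. The latter is again a linear invariant, so \rref{thm:complete-invariants} covers it directly.

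\textbf{Step 3: convergence (the main obstacle).} Unwinding the recursion gives
\[ G_N = (1+L\rho\delta)^N\to e^{LT},\qquad V_N = O\bigl(N\delta(1+O(\delta))^N\bigr)\to Te^{LT} \]
as $\delta\to0$ with $\rho\to1$. The delicate part is choosing rational $\rho$ and $N$ jointly so that the self-consistency conditions hold and the terminal constants, together with the slab maxima $\bar G_j,\bar V_j$ (which exceed the endpoint values by a factor of $\rho$), stay strictly below $M_1$ and $M_2$. The gaps $M_1-e^{LT}>0$ and $M_2-Te^{LT}>0$ leave room for this. We would take $\rho = 1+2L\delta$ for small $\delta$, which satisfies $\rho(1-L\delta)\ge1$, and verify that $(1+L\rho\delta)^N\rho\to e^{LT}$ and the analogous statement for $v$. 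The conclusion $g<M_1\land v<M_2$ then follows from the invariant by monotonicity of the box modality (\dbox{\alpha}{} M).
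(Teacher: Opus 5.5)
Your proposal is correct and is essentially the paper's proof. Both arguments dominate $g$ and $v$ by continuous piecewise-linear envelopes in $t$ whose slopes yield the Darboux-type inequalities $(f(t)-g)'\geq L(f(t)-g)$ and $(u(t)-v)'\geq L(u(t)-v)$; your self-consistency condition $1+L\rho\delta\leq\rho$ is exactly what makes these hold. Both then prove these as linear invariants via \rref{thm:complete-invariants}, and let the endpoint values tend to $e^{LT}$ and $Te^{LT}$. In fact, with $\rho=1/(1-L\delta)$ your $G_j,V_j$ are exactly the paper's implicit-Euler sequences $y_j,z_j$ (with $\bar G_j=G_{j+1}$ and $\bar V_j=V_{j+1}$), so the extra slack $\rho=1+2L\delta$ is harmless but unnecessary.

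Two small repairs are needed. First, conjoin $t\geq 0$ to your $t$-indexed invariant so that it is genuinely inductive, and justify its validity by these Darboux inequalities, which must hold from every state of the invariant set, not just along the true solution. Second, if you prove it directly with dC, cut the affine $g$-bound first (it holds on its own under $t\leq t_{j+1}$) and only then the $v$-bound, since cutting $g\leq\bar G_j$ first is circular.
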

\begin{proof}
    The bounds $M_1, M_2$ are proved similarly, with $g < M_1$ proven first. We first claim that it is possible to construct some (continuous) piecewise linear function $f(t)$ satisfying:
    \begin{enumerate}
        \item $f(T) < M_1$.
        \item $f(0) \geq 1$. 
        \item On each linear segment $[t_i, t_{i + 1})$ of $f(t)$, the linear function satisfies $f'(t) \geq Lf(t)$ for all $t \in [t_i, t_{i + 1})$. 
    \end{enumerate}
    Indeed, such a function $f(t)$ can be constructed by iteratively applying linear interpolation. Let $N \in \N$ be a sufficiently large integer denoting the number of time steps that is to be decided later, this gives a step size of $h = \frac{T}{N}$ decomposing the interval $[0, T]$ into $\{[0, h], [h, 2h], \dots, [(N - 1)h, Nh = T]\}$. Define a monotone sequence of values $\{y_k\}_{0 \leq k \leq N}$ inductively as follows:
    \begin{itemize}
        \item $y_0 = 1$.
        \item $y_{k + 1} = \frac{y_k}{1 - Lh}$ (take $N$ large enough such that $Lh < 1$).
    \end{itemize}
    Define $f(t)$ on the interval $[kh, (k + 1)h)$ by linear interpolation
    \[f(t) = y_{k} + \frac{y_{k + 1} - y_k}{h}(t - kh)\]
    It suffices to verify that this satisfies the desired properties, clearly (2) is satisfied as $f(0) = y_0 = 1 \geq 1$. For property (3), note that on a interval $[kh, (k + 1)h]$ the derivative of $f(t)$ is $f'(t) = \frac{y_{k + 1} - y_k}{h}$, as $y_{k + 1}(1 - Lh) = y_{k}$, this implies
    \[f'(t) = \frac{y_{k + 1} - y_k}{h} = \frac{Lhy_{k + 1}}h = Ly_{k + 1} \geq Lf(t)\]
    where the final inequality follows from $y_{k} \leq y_{k + 1}$. Thus, it remains to establish that by taking $N$ sufficiently large $y_{N} < M_1$ holds. Indeed, the construction of $y_{N}$ gives
    \[\lim_{N \to \infty} y_N = \lim_{N \to \infty} \left(1 - \frac{TL}{N}\right)^{-N} = e^{LT} < M_1\]
    therefore all sufficiently large $N$ will witness $f(T) = y_{N} < M_1$. To complete the proof of the lemma, notice that the following holds:
    \begin{itemize}
        \item To prove $g < M_1$ (in \dL), it suffices to prove $g \leq f(t)$ holds on $[0, T]$ in \dL, i.e. prove the following 
        \[g = 1 \land t = 0 \rightarrow \dbox{g' = Lg, t' =1 \& t \leq T}{f(t) - g \geq 0}\]
        \item On each interval of $f(t)$, computing the derivative of the difference gives 
        \[(f(t) - g)' = f'(t) - Lg \geq Lf(t) - Lg = L(f(t) - g)\]
        That is, on each interval the ($\lra$ definable) term $f(t) - g$ forms a Darboux polynomial (with cofactor $L \in \Q^+$, so linearity is preserved) of the system \cite[Section~3]{DBLP:journals/jacm/PlatzerT20} and $f(t) - g \geq 0$ therefore holds. The overall inequality $f(t) - g \geq 0$ on $[0, T]$ is therefore a valid differential invariant
        \[f(t) - g \geq 0 \rightarrow \dbox{g' = Lg, t' = 1 \& t \leq T} f(t) - g \geq 0\]
        and hence provable via \rref{thm:complete-invariants}, completing the proof for the bound of $g < M_1$. 
    \end{itemize}
    The proof of $v < M_2$ proceeds similarly, first defining a monotone sequence $\{z_k\}_{0 \leq k \leq N}$ via
    \begin{itemize}
        \item $z_0 = 0$
        \item $z_{k + 1} = \frac{z_k + hy_{k + 1}}{1 - hL}$
    \end{itemize}
    Now construct a piecewise linear function $u(t)$ such that $u(t) = z_k + \frac{z_{k + 1} - z_k}{h}(t - kh)$ on $[kh, (k + 1)h]$. A similar analysis of the derivative of $u(t)$ on $[kh, (k + 1)h]$ gives
    \[u'(t) = \frac{z_{k + 1} - z_k}{h} = \frac{z_{k + 1}hL + hy_{k + 1}}{h} = Lz_{k + 1} + y_{k + 1} \geq Lu(t) + f(t)\]
    Thus, this gives (recall that $f(t) - g \geq 0$ has been proven)
    \[(u(t) - v)' \geq Lu(t) + f(t) - g - Lv \geq L(u(t) - v)\]
    Thus $u(t) - v \geq 0$ is also a valid (linear) differential invariant and therefore provable, the bound $v < M_2$ follows by establishing that $u(T) \to Te^{LT}$ as $N \to \infty$. Indeed, it is easy to verify by induction that one has the identity $z_{k} = \frac{kh}{(1 - hL)^{k + 1}}$, taking the limit as $N \to \infty$ of $z_N$ then yields
    \[\lim_{N \to \infty} z_N = \lim_{N \to \infty} \frac{Nh}{(1 - hL)^{N + 1}} = Te^{LT}\]
    which is what we wanted, completing the proof. 
\end{proof}

With Lemma \ref{lem:exp-bound}, the proof of Theorem \ref{thm:complete-convergence} can now be completed. 

\begin{proof}[Proof of Theorem \ref{thm:complete-convergence}]
    Let $\eps \in \Q^+$ be the desired accuracy, and let $k$ be large enough such that the following inequality holds
    \[ \left(\sup_{x_0 \in \eval{C}} \normc{\Phi_k(x_0, t) - \phi(x_0, t)} \right)(2 + T)e^{\normi{A}T} < \eps\]
    Let $L \coloneqq \normi{A}$ and $h, M_1, M_2 \in \Q^+$ be constants such that the following conditions hold
    \begin{enumerate}
        \item $\sup_{x_0 \in \eval{C}} \normc{\Phi_k(x_0, t) - \phi(x_0, t)} < h$
        \item $e^{\normi{A}T} < M_1$
        \item $Te^{\normi{A}T} < M_2$
        \item $h(2M_1 + M_2) < \eps$
    \end{enumerate}
    To proceed, first apply Lemma \ref{lem:linear-invariant} and conclude (by condition (1)) that the following is a valid (and therefore provable by \rref{thm:complete-invariants}) differential invariant
    \begin{align*}
        &\psi_k(x_0, x, g, v, t) \rightarrow \dbox{x' = Ax, g' = Lg, v' = Lv + g, t' = 1 \& t \leq T} \psi_k(x_0, x, g, v, t)
    \end{align*}
    Further note that with initial conditions $g = 1, v = 0, t = 0, x = x_0$, the formula $\psi_k(x_0, x_0, 1, 0, 0)$ is satisfied (recall that $\eps(1, 0) = h(2 + 0 - 1) = h$), hence following is derivable 
    {\small
    \begin{align*}
        &C(x_0) \land x = x_0 \land t = 0 \land g = 1 \land v = 0 \rightarrow\\
        &\dbox{x' = Ax, g' = Lg, v' = Lv + g, t' = 1 \& t \leq T}{\normi{\Phi_k(x_0, t) - x} \leq h(2g + v - 1)}
    \end{align*}
    }
    Invoking Lemma \ref{lem:exp-bound} then proves the following formula
    \begin{align*}
       & t = 0 \land g = 1 \land v = 0 \rightarrow \dbox{g' = Lg, v' = Lv + g, t' = 1 \& t \leq T}{h(2g + v - 1) \leq h(2M_1 + M_2)}
    \end{align*}
    Combining the two formulas above then proves 
    {\small
    \begin{align*}
        &C(x_0) \land x = x_0 \land t = 0 \land g = 1 \land v = 0 \rightarrow\\
        &\dbox{x' = Ax, g' = Lg, v' = Lv + g, t' = 1 \& t \leq T}{\normi{\Phi_k(x_0, t) - x} \leq h(2M_1 + M_2)}
    \end{align*}
    }
    since $h(2M_1 + M_2) < \eps$ by construction, $\lra$ reasoning proves the desired formula. 
\end{proof}

The completeness of topologically open properties (safety, liveness) then follows from the computability and existence of admissible approximations. Intuitively, suppose $O(x)$ defines an open set such that $\phi(x_0, t) \in \eval{O}$ for all $(x_0, t) \in \eval{C} \times [0, T]$. As $\phi(\eval{C}, [0, T])$ is a compact set, this necessarily implies the existence of some $\eps \in \Q^+$ such that the distance from $\phi(\eval{C}, [0, T])$ to the complement of $O(x)$ is at least $\eps$. Consequently, a sufficiently accurate admissible approximation $\Phi_k(x_0, t)$ will then witness this containment, thereby establishing that $O(x)$ contains the reachable set. 
\begin{proposition}[Density of Admissible Approximations]
    \label{prop: admissible approximations are dense}
    A computable sequence of admissible approximations exists. 
\end{proposition}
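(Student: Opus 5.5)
To construct the approximations explicitly, I use the fact that the flow is linear in $x_0$, $\phi(x_0,t)=e^{At}x_0$. This means only the matrix-valued function $t\mapsto e^{At}$ on $[0,T]$ needs approximating. For each $k$ I would take a uniform grid $t_j = jT/N_k$, $j=0,\dots,N_k$, and at each grid point a rational matrix $E_j^{(k)}$ approximating $e^{At_j}$. The matrix $E_j^{(k)}$ is computed, for instance, from a truncated Taylor series $\sum_{i\le m} (At_j)^i/i!$ with rational arithmetic and with $m$ chosen from the explicit tail bound $\normi{A}^{m+1}T^{m+1}e^{\normi{A}T}/(m+1)!$. I then define $\Phi_k(x_0,t)$ on $[t_j,t_{j+1}]$ by linear interpolation in $t$:
\[\Phi_k(x_0,t) = E_j^{(k)}x_0 + \tfrac{t-t_j}{t_{j+1}-t_j}\bigl(E_{j+1}^{(k)}-E_j^{(k)}\bigr)x_0 .\]

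\emph{Definability.} Since $x_0$ ranges over $\R^n$, the product of $(t-t_j)$ with $x_0$ is bilinear, so this formula is \emph{not} obviously $\lra$ definable. This is the main obstacle. My fix is to interpolate in $(x_0,t)$ jointly. Because $\eval{C}$ is compact, it lies in a rational box $B$. I triangulate $B\times[0,T]$ into rational simplices of mesh $\delta_k\to0$, set $\Phi_k$ at each vertex $(x_0,t)$ to a rational approximation of $e^{At}x_0$ (computed with a Taylor bound, error $\le\delta_k^2$), and extend affinely on each simplex. This gives a continuous piecewise affine function whose graph is a finite union of polyhedra with rational data, so it is $\lra$ definable and computable from $k$.

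\emph{Convergence in $\lip$.} I would first show $\normi{\Phi_k-\phi}\to0$ uniformly. On each simplex the error splits into (i) the vertex errors, which are at most $\delta_k^2$, and (ii) the affine interpolation error of the smooth map $(x_0,t)\mapsto e^{At}x_0$. Term (ii) is $O(\delta_k^2)$, with the constant bounded by the second derivatives of $e^{At}x_0$ on the compact set $B\times[0,T]$. For the Lipschitz seminorm in $t$ at fixed $x_0$, the function $\Phi_k(x_0,\cdot)$ is piecewise affine. On each piece its $t$-derivative is the $t$-component of the simplex's affine gradient. Standard interpolation estimates for $C^2$ functions on shape-regular simplices bound its distance from $\partial_t\phi=Ae^{At}x_0$ by $O(\delta_k)$ plus (vertex error)$/\delta_k = O(\delta_k)$. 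This requires a shape-regular triangulation, e.g.\ a scaled Kuhn/Freudenthal subdivision of a uniform cube grid. For a piecewise $C^1$ continuous function, the Lipschitz constant of $\Phi_k(x_0,\cdot)-\phi(x_0,\cdot)$ is the sup of the derivative difference, so $\norml{\Phi_k(x_0,\cdot)-\phi(x_0,\cdot)}=O(\delta_k)$ uniformly in $x_0\in\eval{C}$. Hence the $\lip$ distance tends to $0$.

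Finally, the sequence is computable. For each $k$ the triangulation and the vertex values are obtained by finite rational computation, and the corresponding $\lra$ formula can be written down explicitly. The step I expect to require the most care is the derivative estimate on shape-regular simplices, because the vertex approximation errors must be $o(\delta_k)$ to keep them from spoiling the gradient bound. Choosing vertex accuracy $\delta_k^2$ is what handles this.
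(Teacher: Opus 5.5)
Your proposal is correct and follows essentially the paper's construction: a rational simplicial subdivision of a grid on a bounding box of $\eval{C}$ times $[0,T]$, vertex values of $e^{At}x_0$ accurate to the square of the mesh size, and affine interpolation on each simplex. You additionally spell out two points the paper only asserts, namely the shape-regularity and gradient estimate behind $\lip$ convergence, and why interpolating in $t$ alone (bilinear in $(x_0,t)$) would fail $\lra$ definability.
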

\begin{proof}
    This follows from the compactness of initial states $\eval{C}$. For simplicity, choose $M\in\Q^+$ large enough such that $\eval{C}\subseteq[-M,M]^n$ and abbreviate $G(x,t)\coloneqq e^{At}x$. For every $k\in\N^+$, construct a mesh-grid on $[-M,M]^n\times[0,T]$ with step sizes $M/k$, $T/k$ respectively. 
    At every mesh vertex $z=(q,t_i)$, compute a rational vector $y_z$
    satisfying
    \[
    \|y_z-G(q,t_i)\|_\infty<\frac{1}{k^2},
    \]
    Subdivide each mesh-grid cell into a fixed rational
    simplicial decomposition, and let $\Phi_k$ be the continuous
    function obtained by affine interpolation on each simplex, which is $\lra$ definable. 
    Since $G$ is smooth and the vertices have error $O(1/k^2)$, the constructed piecewise-affine interpolation on
    these increasingly fine meshes converges to $G$ uniformly together
    with its Lipschitz errors:
    \[
    \sup_{x_0\in \eval{C}}
    \|\Phi_k(x_0,t)-G(x_0,t)\|_{C^{0,1}}
    \longrightarrow 0.
    \]
    Exactly as desired. Furthermore, this construction is computable as affine interpolation is computable, completing the construction of a computable sequence of admissible approximations. 
\end{proof}
Proposition \ref{prop: admissible approximations are dense} guarantees the (computable) existence of admissible approximations. Together with \rref{thm:complete-convergence}, this implies the completeness of topologically open properties in the \emph{linear} fragment of \dL. The proofs are supplied in \rref{app:proofs}.
\begin{theorem}[Completeness of Open Safety]
\label{thm:complete-safety}
    Let $O(x) \in \lra$ define a topologically open subset of $\R^n$, then the linear fragment of $\dL$ is complete for the following:
    \[C(x) \land t = 0 \rightarrow \dbox{x' = Ax, t' = 1 \& t \leq T}{O(x)}\]
\end{theorem}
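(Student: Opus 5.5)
Assume the formula is valid, i.e.\ $\phi(x_0,t)\in\eval{O}$ for all $(x_0,t)\in\eval{C}\times[0,T]$. The plan is to reduce the claim to \rref{thm:complete-convergence} using a uniform margin $\eps$ around the compact reachable set.

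First, the set $K\coloneqq\phi(\eval{C},[0,T])$ is compact, since it is the continuous image of a compact set. It is contained in the open set $\eval{O}$. Hence the $\normi{\cdot}$-distance from $K$ to the closed set $\R^n\setminus\eval{O}$ is positive, and we may pick a rational $\eps>0$ with $2\eps$ below that distance. (If $\eval{O}=\R^n$, any $\eps$ works.)

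Next, by \rref{prop: admissible approximations are dense} there is a computable admissible sequence $\Phi_k$. Choose $k$ large enough that two things hold. First, \rref{thm:complete-convergence} makes the formula
\[C(x)\land x=x_0\land t=0\rightarrow\dbox{x'=Ax,t'=1\&t\leq T}{\normi{x-\Phi_k(x_0,t)}<\eps}\]
provable. Second, $\sup\normi{\Phi_k-\phi}<\eps$ on $\eval{C}\times[0,T]$. The second condition yields the $\lra$ fact
\[\forall x_0\forall t\forall x\,\big(C(x_0)\land 0\le t\le T\land\normi{x-\Phi_k(x_0,t)}<\eps\rightarrow O(x)\big).\]
This fact is valid: under its hypotheses, $\normi{x-\phi(x_0,t)}<2\eps$, so $x$ lies within $2\eps$ of $K$ and therefore in $\eval{O}$. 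Since $\Phi_k$, $C$ and $O$ are $\lra$ definable, the fact is decidable and provable by $\lra$ quantifier elimination.

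To assemble the proof in the calculus, we must carry $C(x_0)$ and $t\ge0$ through the flow. We cut in the postcondition $C(x_0)\land t\ge 0\land t\le T\land\normi{x-\Phi_k(x_0,t)}<\eps$. Here $C(x_0)$ holds by constancy of $x_0$ (axiom V or a trivial invariant), $t\ge0$ is a trivial linear differential invariant, and $t\le T$ comes from the domain constraint (DW). We then apply monotonicity (M/MR) with the $\lra$ fact above to conclude $O(x)$.

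Finally, the theorem's antecedent lacks $x_0$. We introduce $x_0$ as a fresh variable equal to $x$ via $\exists$-introduction and the substitution/ghost reasoning on $x_0=x$, which does not change the ODE. This converts $C(x)\land t=0$ into $C(x)\land x=x_0\land t=0$.

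The main obstacle is making the choice of $k$ effective: we need to know when $\sup\normi{\Phi_k-\phi}<\eps$ and when the premise of \rref{thm:complete-convergence} holds. For completeness, existence of a proof suffices. For effectiveness, we can enumerate $k$ and search for proofs, since validity of each candidate $\lra$ side condition is decidable and the invariant obligations are handled by \rref{thm:complete-invariants}.
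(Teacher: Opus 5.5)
Your proposal is correct and follows essentially the same route as the paper. Both arguments take a positive compactness margin between $\phi(\eval{C},[0,T])$ and the complement of $\eval{O}$, make an admissible approximation provably accurate via \rref{thm:complete-convergence}, prove a valid $\lra$ containment fact, and combine; your version spells out the combination step (V for $x_0$, a trivial invariant for $t\geq 0$, dW for $t\leq T$, monotonicity, and introducing $x_0$) more carefully than the paper, and your separate requirement $\sup\normi{\Phi_k-\phi}<\eps$ is redundant but harmless, since soundness of the provable bound already gives it (which is what the paper uses implicitly with its $\eps/4$ bound).
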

\begin{theorem}[Completeness of Open Liveness]
\label{thm:complete-live}
    Let $O(x) \in \lra$ define a topologically open subset of $\R^n$, then the linear fragment of $\dL$ is complete for the following:
    \[C(x) \land t = 0 \rightarrow \ddiamond{x' = Ax, t' = 1 \& t \leq T}{O(x)}\]
\end{theorem}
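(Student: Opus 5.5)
The plan is to reduce open liveness to the convergence result (\rref{thm:complete-convergence}) applied at a single well-chosen time. Assume the formula is valid. Then for every $x_0 \in \eval{C}$ there is some $\tau(x_0) \in [0,T]$ with $\phi(x_0,\tau(x_0)) \in \eval{O}$. Because $O$ is open and $\phi$ is continuous, each $x_0$ has an open neighbourhood $U_{x_0}$ on which $\phi(\cdot,\tau(x_0))$ stays in $\eval{O}$, with room to spare.

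The first step is a compactness argument. Compactness of $\eval{C}$ gives finitely many rational boxes $B_1,\dots,B_m$ covering $\eval{C}$, together with rational times $\tau_j \in [0,T]$ and a rational $\eps>0$ such that $\phi(x_0,\tau_j)$ lies at $\normi{\cdot}$-distance greater than $\eps$ from the complement of $\eval{O}$ for all $x_0 \in \eval{C}\cap \overline{B_j}$. The times can be taken rational by perturbing them slightly, using continuity in $t$. Each set $C_j \equiv C(x) \land x \in \overline{B_j}$ is again $\lra$-definable and compact. Since ``$\normi{y - z} < \eps \rightarrow z\in\eval{O}$'' is not directly first-order for arbitrary $y$, I will use the LRA formula $O_\eps(y) \equiv \forall z\,(\normi{y-z}\le\eps \rightarrow O(z))$ as a stand-in.

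Next I take admissible approximations $\Phi_k$ from \rref{prop: admissible approximations are dense}. Their uniform convergence gives a single $k$ with $\normi{\Phi_k(x_0,\tau_j)-\phi(x_0,\tau_j)}<\eps/2$ on $C_j$. With the $\eps/2$-margin this yields the valid LRA fact $C_j(x_0) \rightarrow O_{\eps/2}(\Phi_k(x_0,\tau_j))$, which is provable by decidability of $\lra$. By \rref{thm:complete-convergence} (with accuracy $\eps/2$), the following is provable:
\[C(x)\land x=x_0\land t=0 \rightarrow \dbox{x'=Ax,t'=1\& t\le T}{\normi{x-\Phi_k(x_0,t)}<\eps/2}.\]
In the box modality I combine this with $t = \tau_j$ and the LRA fact to get $O(x)$. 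The step turning this into the diamond then uses the axioms: the solution exists up to time $\tau_j\le T$, and provably reaching $t=\tau_j$ is a liveness property of the clock $t'=1$ alone, which comes from the standard $\dL$ derivations for $\ddiamond{\cdot}{t\ge\tau_j}$ via a linear progress argument. The combination follows the pattern $\dbox{\alpha}{(t=\tau_j\rightarrow O)} \land \ddiamond{\alpha}{t=\tau_j} \rightarrow \ddiamond{\alpha}{O}$, derivable by modal K reasoning. A case split over $j$ (disjunction of $x\in\overline{B_j}$, valid under $C$ and provable in $\lra$) finishes the argument, with the ghost $x_0$ introduced by the usual existential/assignment step.

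The main obstacle is the compactness step: producing uniform rational $\tau_j$ and margin $\eps$, and ensuring the covering sets remain compact and LRA-definable. The provability side is routine once the reduction to \rref{thm:complete-convergence} is set up. The only genuinely $\dL$-specific step is deriving $\ddiamond{\alpha}{t=\tau_j}$ from $t=0$ within the linear fragment; I expect this to follow from existence axioms for the clock, since $x'=Ax$ is globally defined.
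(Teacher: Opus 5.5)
Your proof is correct, but it uniformizes in a different place than the paper's proof does.

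The paper never fixes the reaching times. It argues by contradiction with sequential compactness: a sequence $x_k \in \eval{C}$ whose approximate orbits stay within $2^{-k}$ of $\eval{O}^c$ has a limit point $x^*$ whose true orbit lies in the closed set $\eval{O}^c$. From this it gets that some approximant $\Phi$ with provable error $\eps$ makes the single $\lra$ formula $\forall x_0 \in \eval{C}\, \exists s \in [0,T]\, \forall y\, (\normi{\Phi(x_0,s)-y} < \eps \rightarrow O(y))$ valid. The choice of time is thus left to an existential quantifier, discharged by $\lra$ decidability. You instead use an open-cover argument to produce finitely many rational boxes, concrete rational times $\tau_j$ and a uniform rational margin, and then do a finite case split.

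The paper's route is shorter: no perturbation to rational times, no covering, and one side condition. However, its final modal step must reach a symbolic time $s$, which it folds into a ``clearly''. Your route costs the extra compactness bookkeeping, but every liveness witness is a concrete rational time. The box-to-diamond step then becomes the explicit K-pattern $\dbox{\alpha}{(t=\tau_j\rightarrow O(x))}\land\ddiamond{\alpha}{t=\tau_j}\rightarrow\ddiamond{\alpha}{O(x)}$, together with reachability of $t=\tau_j$ for the clock alone. The $x'=Ax$ component is restored as a linear differential ghost via the diamond dual of DG. The paper leaves this step implicit.

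Two small points:
\begin{enumerate}
\item Take $k$ beyond the threshold of \rref{thm:complete-convergence} as well.
\item Your separate pointwise estimate on $\Phi_k(x_0,\tau_j)$ is redundant, since soundness of the provable bound from that theorem already gives it.
\end{enumerate}
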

This completes the proof of completeness of (the linear fragment of) \dL for linear ODEs with respect to convergence, open safety and open liveness properties over finite time horizons. Proving that by specializing to the linear fragment, linearity can be preserved down to linear arithmetic while still achieving completeness properties. 

\section{Proving Stepwise Approximations}
\label{section: step-wise certification}
As alluded to in the introduction, while many algorithms have been developed to compute over-approximations of linear ODEs \cite{DBLP:conf/hybrid/Girard05, DBLP:conf/hybrid/GirardGM06, DBLP:conf/hybrid/WetzlingerKBA23}, there has been no unifying logical framework to generate corresponding proofs of correctness. This section justifies that \dL is a suitable axiomatic system for proving correctness of reachability computations from a practical point of view. It is worth noting that directly applying Theorem \ref{thm:complete-convergence} to certify the accuracy of numerical approximations is likely undesirable due to the following observations:
\begin{enumerate}
    \item Given a single $\lra$ definable approximation $\Phi_k(x_0, t)$ of error $h$, Lemma \ref{lem:linear-invariant} essentially establishes that $\Phi_k(x_0, t)$ has provable error $h((2 + T)e^{\normi{A}T} - 1)$, an exponential increase dependent on $T$.
    \item While the results in Section \ref{section: linear axiomatization} establish the completeness of the linear fragment of \dL for (open properties of) linear ODEs via density of admissible approximations, it does not guarantee that such admissible approximations can always be found efficiently.
\end{enumerate}
Furthermore, practical reachability computations of linear homogeneous ODEs $x' = Ax$ frequently utilize the homogeneity of linear ODEs to simplify computations. We consider the standard set-up for a reachability algorithm below \cite{DBLP:conf/hybrid/Girard05}:
\begin{enumerate}
    \item Decompose $[0, T]$ as $[0, N\delta]$ for some sufficiently small time step $\delta$, and compute an approximation $\Phi \approx e^{A\delta}$.
    \item Compute an over-approximation $Z_0$ of the reachable set on $[0, \delta]$ from the set of initial conditions $C$: 
        \[\{e^{As}x_0 \mid x_0 \in C \land s \in [0, \delta]\}\]
    In this section, we consider a standard approach where $Z_0$ is obtained as the bloated convex-hull of the sets $C, \Phi C$. Intuitively, this is just the image of the approximation:
    \[ (x_0, t) \mapsto x_0 + \frac{(\Phi - I)t}{\delta}x_0 \approx e^{At}x_0\] 
    \item As $Z_0$ over-approximates the flow on $[0, \delta]$, it follows that $Z_i \coloneqq \Phi^iZ_0$ (approximately) over-approximates the flow on $[i\delta, (i + 1)\delta]$, thus an overall approximation of the reachable set on $[0, T] = [0, N\delta]$ can be approximated as:
        \[\bigcup_{0 \leq i \leq N - 1} \Phi^iZ_0 = \bigcup_{0 \leq i \leq N - 1}Z_i\]
\end{enumerate}

The first result of this section is Theorem \ref{thm:step_approx}, establishing that \dL can carry out a similar algorithm by going slightly beyond the linear fragment, which is necessary to express the approximation $(x_0, t) \mapsto x_0 + \frac{(\Phi - I)t}{\delta}x_0$. Importantly, such a proof-producing algorithm in \dL does not require reasoning about nonlinear differential invariants via generic completeness theorems \cite{DBLP:journals/jacm/PlatzerT20}. 

\begin{theorem}[Stepwise Approximations]
    \label{thm:step_approx}
    Given an ODE $x' = Ax$ on a rational interval $[0, T]$ with $\lra$ definable compact initial conditions $C(x_0)$ and a desired number of time steps $N$, there computably exists: 
    \begin{itemize}
        \item Time varying matrices $D_i : [0, \delta] \to \R^{n \times n}$ for $i = 0, 1, \dots, N - 1$. 
        \item Error bounds $\eps_i$ for $i = 0, 1, \dots N - 1$. 
    \end{itemize}
    Such that the following $\dL$ formula is provable, where $\delta \coloneqq T/N$.
    \begin{align*}
        C(x_0) \land x = x_0 \land t = 0 &\longrightarrow\\
        \dbox{x' = Ax, t' = 1 \& t \leq T}{&\left(\bigwedge_{0 \leq i \leq N - 1} t \in [i\delta, (i + 1)\delta] \rightarrow \normi{x - D_i(t - i\delta)x_0} \leq \eps_i\right)}
    \end{align*}
    Furthermore, each $D_i(t)$ is affine in $t$ with rational coefficients and has global error $\max_{0 \leq i < N} \eps_i = O(\delta^2)$. 
\end{theorem}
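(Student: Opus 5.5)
We construct the matrices from a rational approximation $\Phi \approx e^{A\delta}$ and its powers, and prove the stepwise bounds by an induction on $i$ that never invokes nonlinear differential-invariant completeness. First compute a rational $\Phi$, e.g.\ a truncated Taylor polynomial $\Phi = \sum_{j=0}^{m} (A\delta)^j/j!$ with $m \geq 2$. Then $\normi{\Phi - e^{A\delta}} = O(\delta^{3})$, with an explicit rational bound from the Taylor remainder and $\normi{A} \leq L$. Define
\[D_0(s) \coloneqq I + \frac{(\Phi - I)s}{\delta}, \qquad D_i(s) \coloneqq D_0(s)\Phi^i, \qquad s \in [0,\delta].\]
Each $D_i$ is affine in $s$ with rational coefficients. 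On $[0,\delta]$ we have $e^{As} - D_0(s) = (e^{As} - I - As) - \frac{s}{\delta}(\Phi - I - A\delta) + O(\delta^3)$, whose norm is $O(\delta^2)$ with an explicit constant $c\,\delta^2$ (e.g.\ $c = L^2e^{L\delta}$ plus slack). This gives a local error $\eps_0 \coloneqq c\delta^2 \cdot \max_{x_0 \in \eval{C}} \normi{x_0}$, and the bound on $\normi{x_0}$ is available since $\eval{C}$ is compact and $\lra$-definable.

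For the base step $i=0$, prove
\[\normi{x - D_0(t)x_0} \leq \eps_0(t)\quad\text{on } t \in [0,\delta],\]
where $\eps_0(t)$ is a polynomial majorant such as $\tfrac{1}{2}L^2 t^2 M' + \dots$, by a differential invariant in the style of Lemma~\ref{lem:linear-invariant}. Set $e \coloneqq x - D_0(t)x_0$. Then $e' = Ax - \frac{\Phi - I}{\delta}x_0 = Ae + \bigl(AD_0(t) - \frac{\Phi - I}{\delta}\bigr)x_0$, and the inhomogeneous term is $O(\delta)$ by the choice of $\Phi$. A Grönwall-type invariant $\normi{e} \leq r(t)$, with $r' \geq L r + \beta$ where $\beta$ is an interval bound on the forcing term, is a barrier certificate for each coordinate face. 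Its proof needs only the derivative inequality on the box $t \in [0,\delta]$, $x_0 \in [-M,M]^n$. To keep the arithmetic to interval bounds, as the paper promises, bound $e^{Lt}$ by polynomial upper functions as in Lemma~\ref{lem:exp-bound}; for example $r(t) = \beta t(1 + Lt)$, which satisfies $r' \geq Lr + \beta$ when $L\delta$ is small.

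For the induction step, use the semigroup property: at time $t = i\delta + s$ we have $x(t) = e^{As}x(i\delta)$. In \dL this is realized by splitting the ODE at $t = i\delta$ (via the box-composition/time-splitting of the flow, as used in \cite{compact_IVP}) and introducing a ghost constant $y \coloneqq x(i\delta)$. Inductively we will have proved $\normi{x(i\delta) - \Phi^i x_0} \leq \eta_i$, where $\eta_0 = 0$ and
\[\eta_{i+1} = \normi{e^{A\delta}}\eta_i + \normi{e^{A\delta} - \Phi}\,\normi{\Phi^i x_0}.\]
The second term is $O(\delta^3)$ per step, so $\eta_i = O(N\delta^3) = O(\delta^2)$, with $\normi{e^{A\delta}} \leq M_1$-type bounds obtained exactly as in Lemma~\ref{lem:exp-bound}. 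Apply the base-step invariant, now with initial state $y$, to get $\normi{x - D_0(s)y} \leq \eps_0'$, and then
\[\normi{x - D_i(s)x_0} \leq \eps_0' + \normi{D_0(s)}\,\eta_i \eqqcolon \eps_i.\]
Both quantities are computable rationals and $O(\delta^2)$ uniformly in $i$, which gives the global error claim. The conjunction over $i$ is assembled with $\dbox{}{}$-monotonicity and the implication guards $t \in [i\delta,(i+1)\delta]$.

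The main obstacle is making the base-step invariant provable using only local interval reasoning. The error term $(AD_0(t) - \frac{\Phi - I}{\delta})x_0$ is bilinear in $(t, x_0)$ and hence nonlinear, so the invariant is not $\lra$. I would first try to certify its coordinatewise bound on the compact box by interval arithmetic on rational matrices, then close the derivative inequality as a Darboux-style inequality $r - e_j \geq 0$ with cofactor $L$, mimicking Lemma~\ref{lem:exp-bound}. A secondary difficulty is that the ghost initial state $y$ ranges over a perturbed image $\Phi^i\eval{C}$ plus an error ball rather than $\eval{C}$. This is handled by carrying an explicit rational bound $M_i \geq \normi{\Phi^i}M + \eta_i$ through the induction.
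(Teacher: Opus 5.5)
Your choice $D_i(s) = D_0(s)\Phi^i$ coincides with the paper's $D^iD_t$, but the $O(\delta^2)$ claim has a genuine gap in the induction step. The recurrence $\eta_{i+1} = \normi{e^{A\delta}}\eta_i + \normi{e^{A\delta}-\Phi}\,\normi{\Phi^i x_0}$ is a correct \emph{semantic} estimate. However, you assert that $\normi{x(i\delta)-\Phi^i x_0}\leq\eta_i$ will have been \emph{proved} in \dL, and nothing in your proposal certifies the $O(\delta^3)$ per-step term. The quantity $\normi{e^{A\delta}-\Phi}$ is a fact about the matrix exponential, not a \dL-provable statement about the flow.

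The only flow bound you actually derive is the Gr\"onwall invariant for the affine interpolant $D_0$. Its forcing term $f(t) = \bigl(AD_0(t) - \frac{\Phi-I}{\delta}\bigr)x_0 \approx A^2(t-\delta/2)x_0$ has size of order $\delta$ and changes sign on $[0,\delta]$. A norm comparison $r'\geq Lr+\beta$ cannot exploit that cancellation, so at $s=\delta$ it only gives $r(\delta)$ of order $\delta^2$. Propagating endpoints with this bound, which is all your toolkit provides, yields the provable recurrence $\eta_{i+1}\leq\normi{\Phi}\eta_i + O(\delta^2)$. Hence you get only $\eta_N = O(N\delta^2) = O(\delta)$, and the global $O(\delta^2)$ error is lost. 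There is also a minor slip: your sample majorant $\tfrac12 L^2t^2M'$ is not a majorant. The true error is about $\tfrac12A^2t(t-\delta)x_0$, which is of order $\delta t$ near $t=0$. Your later choice $r(t)=\beta t(1+Lt)$ is fine.

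The missing ingredient is a separately proven, higher-order endpoint certificate, which is precisely the paper's premise $\End(\eta)$. For $\normi{x_0}\leq 1$, you need to prove $t=\delta\rightarrow\normi{x-\Phi x_0}\leq\eta$ along $x'=Ax$ with $\eta=O(\delta^3)$. One way is to run your Gr\"onwall invariant on the Taylor polynomial $P_m(t)=\sum_{j\leq m}(At)^j/j!$ with $m\geq 2$ instead of on $D_0(t)$, taking $\Phi=P_m(\delta)$. The forcing term then becomes $A^{m+1}t^m x_0/m!$, giving $O(\delta^{m+1})$ at $t=\delta$. You then transport this bound to the ghost state $y=x(i\delta)$ by superposition with scaling (Lemma~\ref{lem:superposition}). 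The paper obtains both its $O(\delta^2)$ interpolation bound and $\End(\eta)$ from this single Taylor certificate.

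With that premise added, your split-at-$i\delta$ induction is a legitimate alternative to the paper's \textnormal{\textsc{Step}}$_\delta$ rule. The paper shifts the entire in-interval bound from $I_{i-1}$ to $I_i$ by exactly $\delta$ via the IVT, giving $\eps_i=\normi{D}\eps_{i-1}+M\eta$. You instead propagate only endpoint errors and pay the $O(\delta^2)$ interpolation error once per interval by re-applying the base bound from $y$. This needs the base bound on the larger box $\normi{y}\leq M_i$, which you already track. Both routes give the same $O(\delta^2)$ order.
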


Theorem \ref{thm:step_approx} can be viewed as the proof-theoretic analog of a standard reachability enclosure algorithm for linear systems \cite{DBLP:conf/hybrid/Girard05}. In particular, Theorem \ref{thm:step_approx} is constructive and only requires a number of \dL proof steps linear in $N$, and the global error bound of $O(\delta^2)$ cannot be improved in general since the initial approximation on $[0, \delta]$ is affine and thereby already has $O(\delta^2)$ error in the worst case. The $\dL$-provability part of the proof is presented below, and a complete argument for the error analysis is provided in \rref{app:proofs}.

\begin{proof}[Proof Sketch of Theorem \ref{thm:step_approx}]
    Intuitively, there are two parts to the algorithm: 
    \begin{itemize}
        \item Base case: Provably establishing a bound of the following form on $[0, \delta]$ for some $D_0 : [0, \delta] \to \Q^{n \times n}$:
            \[C(x_0) \land x = x_0 \land t = 0 \rightarrow \dbox{x' = Ax, t' = 1 \& t \leq \delta}{\normi{x - D_0(t)x_0} \leq \eps_0}\]
        \item Inductive case: Assuming that the bound has been established up to $[0, i\delta]$, extend this to $[0, (i + 1)\delta]$. 
    \end{itemize}
    For the base case, a sufficiently accurate Taylor approximation of $e^{At}$ suffices, which can be axiomatically proven in $\dL$ via standard arguments \cite{compact_IVP}. The base case is also the only part of the proof where nonlinear real arithmetic is used, to obtain error bounds on Taylor approximations. For the inductive step, we first make the following abbreviations (where $I_i = [i\delta, (i + 1)\delta]$):
    \begin{align*}
        \Gamma &\equiv C(x_0) \land x = x_0 \land t = 0\\
        \Acc_i(\eps)
        &\equiv
        t\in I_i\rightarrow
        \normi{x-D_i(t-i\delta)x_0}\leq\eps\\
        \Rad_i(M, \eps)
        &\equiv
        s \in [0, \delta] \land C(x_0) \rightarrow \normi{D_i(s)x_0} + \eps\leq M
    \end{align*}
    The derivation also uses the following end-point error bound for some sufficiently accurate $D \approx e^{A\delta}$, which can similarly be proven by Taylor approximations \footnote{Technically this uses nonlinear arithmetic. However, one Taylor approximation will suffice for both $\End(\eta)$ and the base case.}. 
    \[\End(\eta)
        \equiv
        \normi{x} \leq 1 \land x_0 = x \land t = 0 \rightarrow
        \dbox{x'=Ax,t'=1{\&}t\leq\delta}
        {\bigl(t=\delta\rightarrow\normi{x-Dx_0}\leq \eta\bigr)}\]
    Intuitively, $\Acc_i(\eps)$ encodes that the error bound is at most $\eps$ at the $i$-th time step, $\Rad_i(M, \eps)$ bounds the maximum norm of the $i$-th approximation and $\End(\eta)$ bounds $\normi{e^{A\delta} - D}$. It then suffices to establish that the following proof rule is derivable in $\dL$, where $D_i \coloneqq DD_{i - 1}$ is the previous approximation multiplied by $D$.
    \begin{center}
    \resizebox{0.98\linewidth}{!}{$\displaystyle
    \textsc{Step}_{\delta}\quad
    \linferenceRule[sequent]
    {
        \lsequent{\Gamma}
        {\dbox{\flow{i\delta}}
            {\Acc_{i-1}(\eps)}}
        &
        \lsequent{}{\Rad_{i - 1}(M, \eps)}
        &
        \lsequent{}{\End(\eta)}
        &
        \lsequent{}{\eps^+\geq\normi{D}\eps+M\eta}
    }
    {\lsequent{\Gamma}
        {\dbox{\flow{(i+1)\delta}}{\Acc_i(\eps^+)}}}
    $}
    \end{center}
     Indeed, if \textnormal{\textsc{Step}}$_\delta$ is valid, then $(N - 1)$ repeated applications on the base case will prove the desired bound on $[0, N\delta]$ with $\eps_{i} = \normi{D}\eps_{i - 1} + M_{i - 1}\eta$. Its derivation can be established as follows:
    \begin{enumerate}
    \item This is proved by contradiction. Suppose that the bound does not hold, hence the following $\dL$ formula is satisfiable, i.e. a state of failure can be reached. 
        \begin{align*}
            \Gamma \land \ddiamond{x' = Ax, t' = 1 \& t \leq (i + 1)\delta}{\left(t \in I_i \land \normi{x - D_{i}(t - i\delta)x_0} > \eps_{i}\right)}
        \end{align*}
    \item Denote $r \in I_i$ to be the value of the time variable $t$ reached at the failing state. By the intermediate value theorem in $\dL$ \cite[Lemma~15]{Platzer_Tan_report} applied to $t - (r - \delta)$, the flow to the failing state will first reach some state with $t = r - \delta \in [(i - 1)\delta, i \delta]$. By our premise, at this state one has $\Acc_{i-1}(\eps_{i - 1})$, so $x = D_{i - 1}(r - i\delta)x_0 + \vec{\gamma}$ where $\vec{\gamma}$ is some error term with $\normi{\vec{\gamma}} \leq \eps_{i - 1}$. 
    \item Note that the flow of $D_{i - 1}(r - i\delta)x_0$ after a duration of $t = \delta$ semantically satisfies the following inequality (using the premise of $\Rad_{i - 1}(M, \eps)$)
    \begin{align*}
        \normi{e^{A\delta}D_{i - 1}(r - i\delta)x_0 - DD_{i - 1}(r - i\delta)x_0} &\leq \normi{e^{A\delta} - D}\normi{D_{i - 1}(r - i\delta)x_0}\\
        &\leq \normi{e^{A\delta} - D}(M - \eps)
    \end{align*}
    By the premise of $\End(\eta)$, (semantically) this can further be reduced to an upper bound of $(M - \eps)\eta$. Crucially, this semantic bound is actually syntactically provable, i.e. the following $\dL$ formula is provable
    \begin{align*}
        &x = D_{i - 1}(r - i\delta)x_0 \land t = r - \delta \land \normi{D_{i - 1}(r - i\delta)x_0} \leq M - \eps \longrightarrow\\
        &\dbox{\flow{(i + 1)\delta}}{(t = r \rightarrow \normi{x - DD_{i - 1}(r - i\delta)x_0} \leq (M - \eps)\eta)}
    \end{align*}
    Indeed, this is achieved by applying the provable superposition principle (\rref{lem:superposition}) with the scaling $x \to \frac{x}{M - \eps}$ so that $\normi{x} \leq 1$ is satisfied and the premise $\End(\eta)$ can be applied. A similar computation also establishes that the flow of $\vec{\gamma}$ for $\delta$ time can be bounded by $\normi{D}\eps_{i - 1} + \eps\eta$. 
    \item Thus, the (provable) $i$-th approximation bound can be taken to be $D_{i}(t) = DD_{i - 1}(t)$ and $\eps_i = \normi{D}\eps_{i - 1} + \eps\eta + (M - \eps)\eta = \normi{D}\eps_{i - 1} + M\eta$, deriving the proof rule \textnormal{\textsc{Step}}$_\delta$. 
\end{enumerate}

\end{proof}
The following example provides one such \emph{provable approximation} of a random $3$-dimensional linear ODE. 
\begin{example}[3D Example]
    We compute the reachability enclosure of a random ODE $x' = Ax$ using the algorithm outlined in Theorem \ref{thm:step_approx}, on $[0, 1]$ with a time step of $\delta = \frac{1}{20}$, the matrix $A$ being
    \[
  A =
  \begin{pmatrix}
    -\frac{1}{2} & \frac{3}{10} & -\frac{1}{5} \\
    -\frac{1}{5} & -\frac{2}{5} & \frac{1}{10} \\
    \frac{1}{10} & -\frac{1}{5} & -\frac{1}{2}
  \end{pmatrix}.
  \]
  And the initial conditions $C = [-1/20, 1/20]^3$. This approximation is shown in Figure \ref{fig:box-flowpipe}, which has provable error $\approx 10^{-3}$ certified by KeYmaera X. 
\begin{figure}
    \centering
    \includegraphics[
      width=0.5\linewidth
    ]{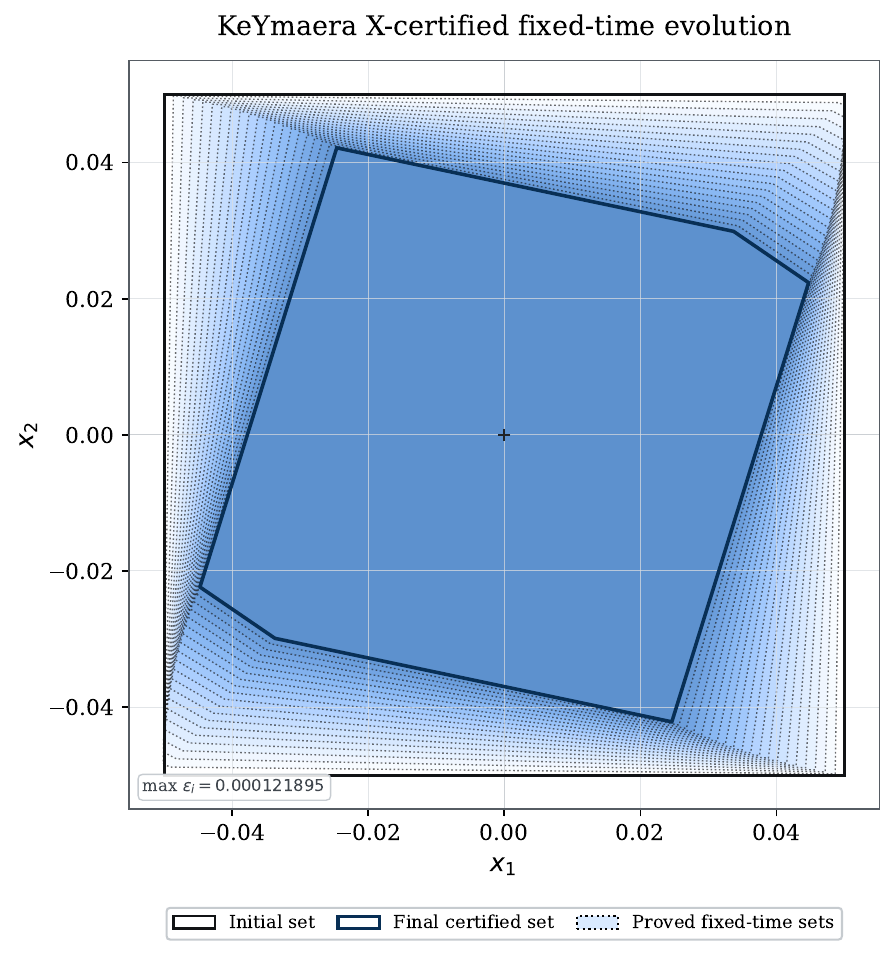}
    \caption{Certified evolution projected onto $(x_1, x_2)$ over $[0, 1]$}
    \label{fig:box-flowpipe}
  \end{figure}
\end{example}

Finally, a direct application of \rref{thm:step_approx} is that $\dL$ is capable of directly certifying reachable sets computed by external tools. 

\begin{corollary}
    \label{cor:step_cert}
    Let $x' = Ax, [0, T], C(x_0), N, \delta$ be as given in \rref{thm:step_approx}. Further assume that a sequence of (bounded) $\lra$ definable sets $Z_0, \dots, Z_{N - 1}$ and a corresponding $\eps > 0$ are given satisfying:
    \begin{itemize}
        \item $B_\infty(\phi(C, [0, \delta]), \eps) \subseteq Z_0$
        \item $B_\infty(\phi(Z_i, \delta), \eps) \subseteq Z_{i + 1} \qquad (0 \leq i < N - 1)$
    \end{itemize}
    where $\phi(x_0, t) = e^{At}x_0$ denotes the flow and $B_\infty(A, r)$ denotes the infinity-norm neighborhood of radius $r$ around $A$. Then the following $\dL$ formula is directly provable, certifying the correctness of $\lra$ approximations:
    \begin{align*}
        C(x_0) \land x = x_0 \land t = 0 \rightarrow \dbox{x' = Ax, t' = 1 \& t \leq T}{\left(\bigwedge_{0 \leq i \leq N - 1} t \in [i\delta, (i + 1)\delta] \rightarrow Z_i(x)\right)}
    \end{align*}
\end{corollary}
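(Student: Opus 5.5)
The plan is to reduce Corollary~\ref{cor:step_cert} to \rref{thm:step_approx}: invoke it with a sufficiently large refinement of the time step, and then discharge the containment in $Z_i$ by linear real arithmetic. The $\eps$-margins in the hypotheses leave room for the provable (but nonzero) approximation errors $\eps_i$ produced by the theorem.

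First I would establish the semantic fact that the true reachable set on $[i\delta,(i+1)\delta]$ lies deep inside $Z_i$. Concretely, by induction on $i$ using the semigroup property $\phi(x_0,t)=\phi(\phi(x_0,t-\delta),\delta)$,
\[\phi(C,[i\delta,(i+1)\delta]) \subseteq \phi(Z_{i-1},\delta),\]
and by hypothesis $B_\infty(\phi(Z_{i-1},\delta),\eps)\subseteq Z_i$. The base case is $B_\infty(\phi(C,[0,\delta]),\eps)\subseteq Z_0$. Hence every true state $x=e^{At}x_0$ with $t\in I_i$ and $x_0\in\eval{C}$ has its whole $\eps$-ball inside $\eval{Z_i}$.

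Next I would apply \rref{thm:step_approx}. The theorem gives global error $O(\delta^2)$ only for small $\delta$, while $N$ is fixed here. I would therefore run it with $N' = mN$ steps of size $\delta/m$, choosing $m$ large enough that $\max_j \eps_j < \eps$. This is possible because the errors tend to $0$ as the step size shrinks, by the error analysis behind \rref{thm:step_approx}. Each interval $I_i$ is the union of the $m$ subintervals $[j\delta/m,(j+1)\delta/m]$ with $im\le j<(i+1)m$. Monotone reasoning inside the box modality (the axiom K together with $\lra$ case splitting on $t$) then yields, for $t\in I_i$, some $j$ with $\normi{x - D_j(t-j\delta/m)x_0}\le \eps_j<\eps$.

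It then suffices to prove, in pure real arithmetic, the implication
\[C(x_0)\land t\in[j\delta',(j+1)\delta']\land \normi{x-D_j(t-j\delta')x_0}\le\eps_j \rightarrow Z_i(x), \qquad \delta'=\delta/m.\]
Here lies the main obstacle. The point $D_j(\cdot)x_0$ approximates, rather than equals, the true flow, so validity of the implication requires $\normi{x-e^{At}x_0}<\eps$. That follows from $\normi{x - D_j x_0}\le \eps_j$ together with $\normi{D_j x_0 - e^{At}x_0}\le \eps_j$, which holds semantically by soundness, provided $2\max_j\eps_j<\eps$; so I would choose $m$ with that margin. The resulting formula is valid, and since it is a formula of (at most polynomial, because $D_j$ is affine in $t$ and multiplied by $x_0$) real arithmetic, it is provable by the decidability of $\folr$.

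Finally, I would combine the pieces via monotonicity ($\dbox{\alpha}{}$ distributes over the valid implication) and obtain the stated conjunction over $i$. The number of proof steps remains linear in $N'$.
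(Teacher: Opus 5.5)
Your proof is correct, but it follows a different route from the paper's.

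You use \rref{thm:step_approx} as a black box. You refine the mesh to $mN$ steps until its accumulated error satisfies $2\max_j\eps_j<\eps$. You then discharge each $Z_i(x)$ with a single real-arithmetic implication. That implication is valid because soundness places the true flow within $\eps_j$ of $D_j(t-j\delta')x_0$, and your semantic induction places the true reachable set on $I_i$ at least $\eps$ inside $Z_i$.

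The paper instead uses the sets $Z_i$ themselves as the inductive invariant. It proves one sufficiently accurate one-step approximation $D(t)$ on $[0,\delta]$ with error $\eta=\eps/(4M)$; this is a Taylor approximation and so possibly nonlinear. It handles the base case by scaling and superposition. For the inductive step, it uses the IVT argument of \rref{thm:step_approx} to show that a state in $Z_i$ at time $t-\delta$ lies in $DZ_i+B(0,M_i\eta)$ at time $t$, where $D=D(\delta)$. Each step is then closed by the containment $DZ_i+B(0,M_i\eta)\subseteq Z_{i+1}$, which is an $\lra$ formula.

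Your route buys modularity. It needs no new differential reasoning beyond monotonicity under the box, plus the vacuous axiom V to carry $C(x_0)$ inside the box, which you left implicit. What it gives up is exactly what the paper emphasizes after the corollary:
\begin{itemize}
\item Your final obligations contain the bilinear term $D_j(t-j\delta')x_0$. Every containment check therefore needs nonlinear quantifier elimination rather than $\lra$.
\item You spend the margin $\eps$ only once, at the end, instead of using it to absorb the error at every step. You therefore have to drive the globally accumulated error, amplified by the factors $\normi{D}^j$, below $\eps/2$. This forces a refinement factor $m$ that depends on $\eps$.
\end{itemize}
As a result, the supplied $Z_i$ are merely checked against an independently computed enclosure. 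In the paper they serve as certificates that keep the proof at the original $N$ steps.
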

Intuitively, \rref{cor:step_cert} establishes that $\dL$ can directly certify the correctness of reachability sets provided that they allow for a positive numerical tolerance, which is expected of numerical approximations. The proof is supplied in \rref{app:proofs}. An important feature of the proof is that the nonlinear fragment of $\dL$ is only needed to establish the correctness of $Z_0$, and all containments of the form $\phi(Z_i, \delta) \subseteq Z_{i + 1}$ are soundly reduced to $\lra$. 

\begin{example}[Certified Zonotopes]
    As an example application of \rref{cor:step_cert}, we certify a sequence of 100 time-step reachability enclosures of the ODE $x' = Ax$ over the time interval $[0, 1]$ using $\delta = \frac{1}{100}$ and initial conditions $[0.9, 1.1]^5$ with 
    \[
  A =
  \begin{pmatrix}
  -1 & -4 &  0 &  0 &  0 \\
   4 & -1 &  0 &  0 &  0 \\
   0 &  0 & -3 &  1 &  0 \\
   0 &  0 & -1 & -3 &  0 \\
   0 &  0 &  0 &  0 & -2
  \end{pmatrix}.
  \]
  The set approximations were zonotopes computed by the external tool JuliaReach \cite{DBLP:conf/hybrid/BogomolovFFPS19}, with a conservative rounding ($\approx 10^{-5}$) applied to account for floating-point errors. The projections of the certified reachable sets are shown in \rref{fig:5d-flowpipe}. It should be stressed that these reachable sets are \emph{provably correct} as certified by KeYmaera X, which were obtained via \emph{unverified} approximations from external tools.
  \begin{figure}
    \centering
    \includegraphics[
      width=1.0\linewidth
    ]{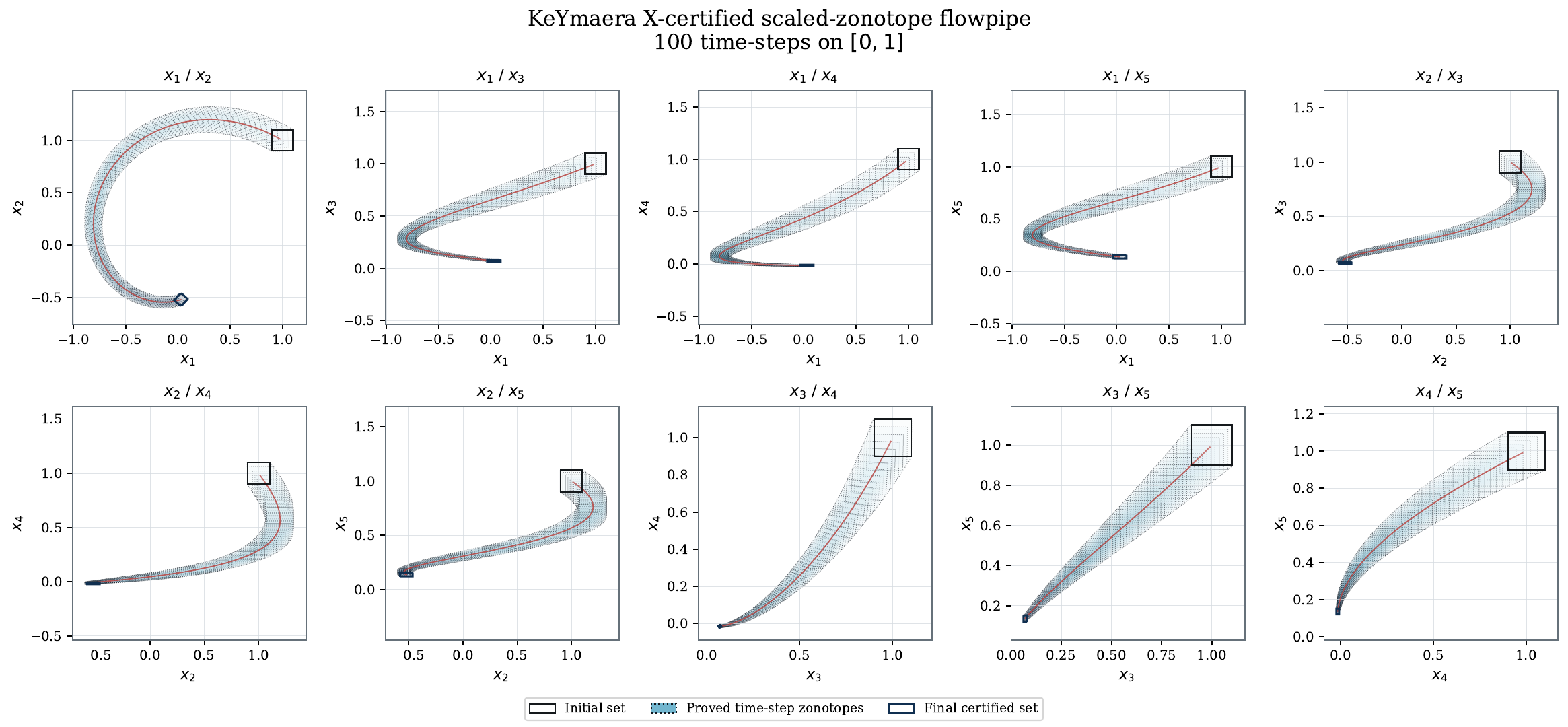}
    \caption{Certified evolution of $5$D system over $[0, 1]$}
    \label{fig:5d-flowpipe}
  \end{figure}
\end{example}

\section{Related Work}
\noindent\textbf{Axiomatization of Differential Equations.} This article contributes to the axiomatic study of (linear) ODEs with $\dL$. Earlier works in $\dL$ either do not establish (absolute) completeness \cite{DBLP:conf/lics/Platzer12b, DBLP:conf/lics/Platzer12a, DBLP:journals/jar/Platzer08}, or prove completeness for fragments of $\dL$ that include general polynomial ODEs \cite{DBLP:journals/jacm/PlatzerT20, compact_IVP} and therefore require the use of nonlinear real arithmetic. In particular, earlier work \cite{compact_IVP} concerning the completeness of open properties on finite time horizons relies upon the general completeness of (nonlinear) differential invariants. Beyond $\dL$, axiomatic approaches to continuous/hybrid dynamical systems include automata-theoretic approaches \cite{DBLP:conf/hybrid/NerodeK92a,DBLP:conf/hybrid/AlurCHH92}, hybrid Hoare logic \cite{DBLP:conf/fm/ShengBZ23} and embeddings such as IsaVODEs \cite{DBLP:journals/jar/MuniveFGSLH24}. The general decidability and completeness of differential invariants in $\dL$ have been established in earlier works \cite{DBLP:conf/emsoft/LiuZZ11,DBLP:journals/jsc/GhorbalS22, DBLP:journals/jacm/PlatzerT20}, which this article expands upon. The specific (arithmetic) problem of certifying zonotope containment in $\dL$ has also been explored in earlier work \cite{DBLP:conf/ifm/HellwigSQPA25}.

\noindent\textbf{Linear Differential Equations.} The quantitative study of linear ODEs, specifically in the case of safety verification, has resulted in many efficient algorithms to compute over-approximations of reachable sets, implemented in tools such as CORA \cite{DBLP:conf/cpsweek/Althoff15}, Flow* \cite{DBLP:conf/cav/ChenAS13}, SpaceEx \cite{DBLP:conf/cav/FrehseGDCRLRGDM11}, JuliaReach \cite{DBLP:conf/hybrid/BogomolovFFPS19} and many others. While such algorithms are mathematically rigorous in theory, most of them have not been formally verified outside of a few exceptions \cite{DBLP:journals/jar/Immler18, DBLP:conf/itp/0001T24}. In particular, such approaches focus on formally verifying \emph{algorithms} for reachability computations, as opposed to the approach presented in this article which certifies \emph{results} with proofs in $\dL$.

It should be stressed that while linear ODEs are mathematically ``simple'' in the sense that a closed-form solution $\phi(x_0, t) = e^{At}x_0$ can be written down explicitly, designing provably correct and scalable algorithms to over-approximate reachable sets is far from trivial. Even the computation of the underlying matrix exponential is surprisingly intricate \cite{DBLP:journals/siamrev/MolerL03}. Traditional algorithms for numerically solving ODEs (e.g., Runge-Kutta methods) are primarily designed to compute the approximate reachable set starting from a \emph{single} initial condition instead of a set of initial conditions as needed for the analysis of cyber-physical systems, and generally do not provide guaranteed error bounds that are needed for safety verification.

Concerning the qualitative study of linear ODEs, many deep results have been established that relate computability-theoretic questions to open problems in number theory \cite{DBLP:journals/tcs/BellDJB10, DBLP:conf/icalp/ChonevOW16}. Results concerning the general synthesis of o-minimal definable differential invariants focusing on infinite time horizons avoiding semi-algebraic unsafe sets have also been established \cite{DBLP:conf/icalp/AlmagorKO020}.

\section{Conclusion}
This article establishes that the linear fragment of \dL is complete for topologically open properties of linear ODEs, in particular including correctness properties of (numerical) reachability approximations. Topologically open properties of linear ODEs can be proven in linear real arithmetic with a linear number of Lie derivatives, establishing the theoretical foundations for complete proof-producing certifications of reachability computations for linear ODEs.
Furthermore, this article establishes that a standard algorithm \cite{DBLP:conf/hybrid/Girard05} can be provably implemented in full (nonlinear) \dL using a linear number of proof steps. For future work, it would be interesting to determine the computational complexity of such certification procedures. 

\bibliographystyle{plainurl}
\bibliography{references}

\appendix

\section{Axiomatization}
\label{app:axioms}
This section provides a complete record of $\dL$'s axiomatization for reference.

\begin{theorem}[\cite{DBLP:conf/lics/Platzer12b, DBLP:journals/jacm/PlatzerT20, DBLP:journals/fac/TanP21}]
    \label{thm: base axiomatization of dL}
    The following are sound axioms of $\dL$. In axioms \irref{cont}, \irref{dadj}, \irref{bdg}, the variables $y$ is fresh. In axiom \irref{bdg}, $Q(x)$ is required to be a formula of real arithmetic. 

    \begin{calculus}
        \cinferenceRule[qear|\usebox{\Rval}]{quantifier elimination real arithmetic}
        {\linferenceRule[sequent]
          {}
          {\lsequent[g]{\Gamma}{\Delta}}
        }{$\text{if}~\landfold_{\ausfml\in\Gamma} \ausfml \limply \lorfold_{\busfml\in\Delta} \busfml ~\text{is valid in \LOS[\reals]}$}%
    
        \cinferenceRule[diamond|$\didia{\cdot}$]{diamond axiom}
        {\linferenceRule[equiv]
          {\lnot\dbox{\ausprg}{\lnot \ausfml}}
          {\ddiamond{\ausprg}{\ausfml}}
        }
        {}

        \cinferenceRule[evolved|$\didia{'}$]{evolve}
        {\linferenceRule[equiv]
          {\lexists{t{\geq}0}{\ddiamond{\pupdate{\pumod{x}{y(t)}}}{p(x)}}\hspace{1cm}}
          {\ddiamond{\pevolve{\D{x}=\genDE{x}}}{p(x)}}
        }{$\m{\D{y}(t)=\genDE{y}}$}%

        \cinferenceRule[B|B$'$]{}
        {\linferenceRule[equiv]
          {\lexists{y}{\ddiamond{\pevolvein{x'=f(x)}{Q(x)}}{\rfvar(x,y)}}}
          {\ddiamond{\pevolvein{x'=f(x)}{Q(x)}}{\exists{y}\rfvar(x,y)}}
        }{\text{$y \not\in x$}}
        
        \cinferenceRule[K|K]{K axiom / modal modus ponens} %
        {\linferenceRule[impl]
          {\dbox{\alpha}{(\fvarA \limply \fvarB)}}
          {(\dbox{\alpha}{\fvarA}\limply\dbox{\alpha}{\fvarB})}
        }{}
        \cinferenceRule[V|V]{vacuous $\dbox{}{}$}
         {\linferenceRule[impl]
           {\fvarA}
           {\dbox{\alpha}{\fvarA}}
         }{\text{no free variable of $\fvarA$ is bound by $\alpha$}}
        \cinferenceRule[G|G]{$\dbox{}{}$ generalization} %
        {\linferenceRule[formula]
          {\lsequent{}{\fvarA}}
          {\lsequent{\Gamma}{\dbox{\alpha}{\fvarA}}}
        }{}
        
        \cinferenceRule[dW|dW]{}
        {\linferenceRule
          {\lsequent{\ivr}{P}}
          {\lsequent{\Gamma}{\dbox{\pevolvein{\D{x}=\genDE{x}}{\ivr}}{P}}}
        }{}
        
        \cinferenceRule[dC|dC]{differential cut}%
        {\linferenceRule[sequent]
          {\lsequent[L]{}{\dbox{\pevolvein{\D{x}=\genDE{x}}{\ivr}}{\cusfml}}
          &\lsequent[L]{}{\dbox{\pevolvein{\D{x}=\genDE{x}}{(\ivr\land \cusfml)}}{\ousfml[x]}}}
          {\lsequent[L]{}{\dbox{\pevolvein{\D{x}=\genDE{x}}{\ivr}}{\ousfml[x]}}}
        }{}
        
        \cinferenceRule[DG|DG]{differential ghost variables}
        {\linferenceRule[viuqe]
          {\dbox{\pevolvein{x'=\genDE{x}}{\ivr(x)}}{\ousfml[x](x)}}
          {\lexists{y}{\dbox{\pevolvein{x'=\genDE{x}\syssep y'=a(x)\cdot y+b(x)}{\ivr(x)}}{\ousfml[x](x)}}}
        }
        {}

        \cinferenceRule[DGi|DGi]{differential ghost variables}
        {\linferenceRule[impl]
          {\dbox{\pevolvein{x'=\genDE{x}}{\ivr(x)}}{\ousfml[x](x)}}
          {\forall{y}{\dbox{\pevolvein{x'=\genDE{x}\syssep y'= g(x, y)}{\ivr(x)}}{\ousfml[x](x)}}}
        }
        {}
        
        \cinferenceRule[thereAndBack_orig|{[$\&$]}]{there and back quantification}
       {
       \linferenceRule[equiv]
       {\forall t_0 {=} c_0 \dbox{x' = \theta}{\left(\dbox{x' = -\theta}{\left(c_0 \geq t_0 \rightarrow \chi\right) \rightarrow \phi}\right)}}
        {\dbox{x' = \theta \& \chi}{\phi}}
       }
       {}

        \cinferenceRule[dx|DX]{differential skip}
        {
            \linferenceRule[equiv]
            {\left(Q \rightarrow P \land \dbox{x' = f(x) \& Q}{P}\right)}
            {\dbox{x'= f(x) \& Q}{P}}
        }
        {$x' \notin P, Q$}
        
        \cinferenceRule[uniq|Uniq]{vanilla uniqueness}
        {\linferenceRule[equiv]
        {\left(\ddiamond{x'= f(x) \& Q_1}{P}\right) \land \left(\ddiamond{x'= f(x) \& Q_2}{P}\right)}
        {\ddiamond{x'= f(x) \& Q_1 \land Q_2}{P}}
        }{}

        \cinferenceRule[cont|Cont]{continuity of ODE}
        {\linferenceRule[impl]
        {x = y}
        {\left(\ddiamond{x'= f(x) \& e > 0}{x \neq y} \leftrightarrow e > 0\right)}
        }{$f(x) \neq 0$}

        \cinferenceRule[dadj|Dadj]{reverse flow of ODEs}
        {\linferenceRule[equiv]
        {\ddiamond{y' = -f(y) \& Q(y)}{y = x}}
        {\ddiamond{x' = f(x) \& Q(x)}{x = y}}
        }{}

        \cinferenceRule[ri|RI]{real induction axiom}
        {\linferenceRule[equiv]
        {\forall y\dbox{x' = f(x) \& P \lor x = y}{\left(x = y \rightarrow P \land \ddiamond{x' = f(x) \& P \lor x = y}{x \neq y}\right)}}
        {\dbox{x' = f(x)}{P}}
        }{}
        
        \cinferenceRule[bdg|BDG]{bounded differential ghost}
       {
       \linferenceRule[impll]
       {\dbox{x' = f(x), y' = g(x, y) \& Q(x)}{\norm{y}^2 \leq p(x)}}
        {\left(\dbox{x' = f(x) \& Q(x)}{P(x)} \leftrightarrow \dbox{x' = f(x), y' = g(x, y) \& Q(x)}{P(x)}\right)}
       }
       {}
       
    \end{calculus}
\end{theorem}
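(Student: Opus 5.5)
The plan is to establish soundness one axiom at a time, directly from the semantics of \dL. For every axiom schema I will show that every instance is true in every state, and for the rules \irref{G} and \irref{dW} that validity of the premise gives validity of the conclusion. Soundness of the whole calculus then follows by induction on derivations, and most of the work can be taken from the cited sources \cite{DBLP:conf/lics/Platzer12b, DBLP:journals/jacm/PlatzerT20, DBLP:journals/fac/TanP21}.

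The semantic tool I would fix first is this: every ODE appearing here has a polynomial (hence locally Lipschitz) right-hand side. By Picard--Lindel\"of, from each state there is a unique maximal solution $\varphi:[0,\tau)\to\R^n$. The box and diamond modalities quantify over the prefixes $\varphi|_{[0,r]}$ that stay inside the domain constraint $Q$.

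The easy group is handled directly from these semantics.
\begin{itemize}
\item \irref{qear} is Tarski's theorem.
\item \irref{diamond}, \irref{K}, \irref{G} and \irref{V} are the usual normal-modal facts for a Kripke-style relation.
\item \irref{dW} holds because every state reached satisfies $Q$.
\item \irref{dC} holds because if $C$ is true along all $Q$-runs, then the $Q$-runs and the $(Q\land C)$-runs coincide.
\item \irref{evolved} is exactly the solution semantics once $y(t)$ is the explicit solution.
\item \irref{B} commutes an existential quantifier over a variable $y$ that the ODE does not bind.
\item \irref{dx} is the $r=0$ prefix.
\item \irref{uniq} uses uniqueness of solutions: two $Q_1$- and $Q_2$-runs reaching $P$ are prefixes of the same maximal solution, so the shorter one lies in $Q_1\land Q_2$.
\item \irref{cont} follows from continuity of $\varphi$ together with $f(x)\neq 0$, which ensures the state actually moves.
\item \irref{dadj} uses time reversal, $\psi(s)=\varphi(r-s)$.
\item \irref{thereAndBack_orig} states that $\chi$ held all along the run; by uniqueness of backward solutions this equals checking $\chi$ on the reverse flow up to the initial time.
\end{itemize}
For \irref{DG} and \irref{DGi}, the key point is that $y'=a(x)y+b(x)$ is linear in $y$, so its solution exists for as long as $x$ does. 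Adding $y$ therefore neither shortens nor otherwise changes the $x$-projection of the runs, and $P$ does not mention $y$.

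I expect the main obstacles to be \irref{ri} and \irref{bdg}.

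For \irref{ri}, the direction from left to right uses the real induction principle on $[0,\tau)$. Suppose $P$ fails somewhere on the solution, and let $s^*$ be the infimum of the failure times. The right-hand side, instantiated at $y=\varphi(s^*)$, gives $P$ at $s^*$, and it also gives a local progression into $P$ just after $s^*$. This contradicts $s^*$ being the infimum. The converse direction is immediate. The delicate part is handling the domain $P\lor x=y$ correctly at the infimum; I would first isolate the topological lemma on $[0,\tau)$ and then apply it.

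For \irref{bdg}, the danger is that the ghost ODE $y'=g(x,y)$ is nonlinear, so $y$ could blow up and cut off the runs of $x$. The premise bounds $\norm{y}^2$ by $p(x)$, and $p(x)$ is continuous and hence bounded on every compact $x$-segment. By the escape lemma (a maximal solution leaves every compact set), the extended solution therefore exists for as long as the $x$-solution stays in $Q$. So the $x$-runs of the two systems coincide, which yields the equivalence in the conclusion. Here I would argue by contradiction on the maximal existence time of the extended system.
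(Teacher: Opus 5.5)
\textbf{Overall route.} The paper gives no proof of this theorem; it imports it from the cited sources. An axiom-by-axiom semantic soundness check, as you outline, is therefore the right route and the one those sources take. Two of your steps, however, are wrong as stated.

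\textbf{RI.} The direction you call immediate, from $[x'=f(x)]P$ to the right-hand side, fails without a further hypothesis. Take $x'=0$ and $P\equiv\mathit{true}$, and instantiate $y$ with the initial value of $x$. Then $[x'=0]\,\mathit{true}$ holds. On the right-hand side, the run of duration $0$ ends in $x=y$, so it demands $\langle x'=0\,\&\,P\lor x=y\rangle\, x\neq y$. This is false, because the solution is stationary.

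The progress formula $\langle\cdot\rangle\, x\neq y$ only expresses that a run of positive duration exists when the ODE never comes to rest. So for RI you must use the same nonstationarity assumption ($f(x)\neq 0$, e.g.\ a clock added by DG) that the Remark states only for Cont and the there-and-back axiom. With it, the direction is indeed easy. The maximal solution extends beyond the endpoint of any run, since its existence interval is open. $P$ holds on that extension, and $x\neq y$ at every positive time.

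Your infimum argument for the other direction is fine and needs no clock. The right-hand side already yields $P$ at the state where $x=y$, and only $x$ changes along the flow. Hence the domain $P\lor x=y$ forces $P$ along the progress run even if the trajectory returns to $y$.

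\textbf{DGi.} You reuse the linearity argument from DG, but in DGi the ghost $y'=g(x,y)$ is arbitrary. Your claim that adding $y$ does not shorten the runs is then false: e.g.\ $y'=y^2$ blows up in finite time and cuts off the extended runs. What DGi needs is only the projection direction. By uniqueness of solutions, the $x$-component of every run of the extended system is a run of $x'=f(x)\,\&\,Q(x)$. So the $x$-projections of the extended end states form a subset of the original end states. This is exactly why DGi is only an implication. DG and BDG, by contrast, also need the converse: every $x$-run must extend to a run of the larger system, which your linearity and escape-lemma arguments correctly supply.
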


\begin{remark}
    In axioms \irref{thereAndBack_orig} and \irref{cont}, it is assumed that the ODE $x' = f(x)$ includes a clock variable $c_0' = 1$. This assumption can be made without loss of generality since a clock variable can always be added using \irref{DG}. The variable $t_0$ is also assumed to be fresh in \irref{thereAndBack_orig}.
\end{remark}

\section{Proofs}
\label{app:proofs}

\begin{proof}[Proof of \rref{thm:complete-safety}]
    To establish completeness, we assume the validity of the formula and aim to show provability. As $O(x)$ is open and $\eval{C}$ is compact, there must exist some $\eps \in \Q^+$ such that $d(\phi(\eval{C}, [0, T]), \eval{O}^c) > \eps$, i.e. the flow is of distance at least $\eps$ to the complement of $\eval{O}$. Let $\eps \in \Q^+$ be some such witness, and $\Phi(x_0, t)$ be some $\lra$ definable approximation of $\phi(x_0, t)$ with \emph{provable} error at most $\frac{\eps}{4}$, which necessarily exists by Proposition \ref{prop: admissible approximations are dense} and Theorem \ref{thm:complete-convergence}. The following $\lra$ formula is therefore valid and provable
    \begin{align*}
        &C(x_0) \land t \in [0, T] \land \normi{\Phi(x_0, t) - y} < \frac{\eps}{4} \rightarrow O(y)
    \end{align*}
    As $\Phi(x_0, t)$ was constructed to have provable error at most $\frac{\eps}{4}$, this is equivalent to the provability of 
    \begin{align*}
        &C(x_0) \land x = x_0 \land t = 0 \rightarrow \dbox{x' = Ax, t' = 1 \& t \leq T}{ \normi{\Phi(x_0, t) - x} < \frac{\eps}{4}}
    \end{align*}
    Combining these then establishes the provability of the following as desired
    \[C(x)\land t = 0 \rightarrow \dbox{x' = Ax, t' = 1 \& t \leq T}O(x) \qedhere\]
\end{proof}

\begin{proof}[Proof of \rref{thm:complete-live}]
    The proof of liveness is similar to the proof of Theorem \ref{thm:complete-safety}. Recall that the liveness formula is equivalent to the validity of the following quantified formula involving the flow $\phi(x_0, t)$.
    \[\forall x_0 \in \eval{C} \exists s \in [0, T]\, O(\phi(x_0, s))\]
    Clearly, a sufficient condition of provability is if there exists some approximant $\Phi(x_0, t)$ with provable error $\eps \in \Q^+$ such that the following $\lra$ formula is valid 
    \[\forall x_0 \in \eval{C} \exists s \in [0, T] \forall y \normi{\Phi(x_0, s) - y} < \eps \rightarrow O(y)\]
    I.e. For every initial condition $x_0$, there exists some time $s \in [0, T]$ such that the $\eps$-ball around $\Phi(x_0, s)$ lies within the target set $\eval{O}$. However, such an approximant must necessarily exist as $\eval{O}$ is open. Indeed, suppose that this is false, then there exists a sequence of approximations $\Phi_k(x_0, s)$ converging to $\phi(x_0, s)$ with uniform error $\normi{\Phi_k(x_0, s) - \phi(x_0, s)} < 2^{-k}$, such that for all $k \in \N$, there exists $x_k \in \eval{C}$ where the distance to the complement of $\eval{O}$ satisfies $d(\eval{O}^c, \Phi_k(x_k, s)) < 2^{-k}$ for all $s \in [0, T]$. However, the compactness of $\eval{C}$ then implies the existence of a limiting point (up to taking a subsequence) $x_k \to x^* \in \eval{C}$, satisfying $d(\eval{O}^c, \phi(x^*, s)) = 0$ for all $s \in [0, T]$. But since $\eval{O}^c$ is closed, this implies that $x^*$ never reaches $\eval{O}$, a contradiction. 
\end{proof}

The following lemma provably establishes the superposition principle in $\dL$
\begin{lemma}[Linear Superposition]
    \label{lem:superposition}
    The following is derivable axiom of $\dL$
    \[y = ax + z \rightarrow \dbox{x' = Ax, y' = Ay, z' = Az}{y = ax + z}\]
\end{lemma}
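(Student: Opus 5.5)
The plan is to prove the formula as a differential invariant. Here $a$ is a constant (rational, or at least not bound by the ODE), so the equality $y = ax + z$ is an equational invariant. Rewrite it coordinatewise as the conjunction $\bigwedge_{j=1}^n (y_j - a x_j - z_j = 0)$. Since equalities can be split into a conjunction of two non-strict inequalities, and invariance distributes over conjunction via \irref{K} and \irref{G}, it suffices to handle each coordinate. For the fully general ODE system the vector of differences $w \coloneqq y - ax - z$ is the object of interest.

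The key computation is the Lie derivative of $w$ along the combined vector field $x' = Ax,\ y' = Ay,\ z' = Az$:
\[
w' = Ay - aAx - Az = A(y - ax - z) = Aw .
\]
Hence each coordinate $w_j$ has Lie derivative $(Aw)_j = \sum_k A_{jk} w_k$, which lies in the ideal generated by $w_1, \dots, w_n$. This is exactly the vectorial Darboux condition (a system of differential-radical/Darboux equalities with constant cofactor matrix $A$). In \dL this is handled by the vectorial Darboux equality derivation \cite[Section~3]{DBLP:journals/jacm/PlatzerT20}: introduce via \irref{DG} a ghost matrix system that is linear in itself, or argue directly that $w = 0$ is an invariant because $w$ satisfies a linear ODE with unique solution zero.

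If one prefers to avoid the vectorial Darboux machinery, a clean alternative is available. The formula is a valid invariant of the form $I \rightarrow \dbox{\alpha}{I}$ with $I$ linear (assuming $a$ is a numeric constant; if $a$ is a symbolic parameter, treat it as a constant of the dynamics with $a' = 0$). Apply \rref{thm:complete-invariants} directly. Validity is semantic: the flows are $x(t) = e^{At}x$, and similarly for $y$ and $z$, so
\[
y(t) - a x(t) - z(t) = e^{At}(y - ax - z) = 0 .
\]
Provability then follows from completeness. The Lie derivative chain stabilizes after $n$ steps by Cayley--Hamilton, as already argued in that proof.

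The main obstacle is the case where $a$ is symbolic, since then $ax$ is nonlinear and \rref{thm:complete-invariants} as stated only covers $\lra$. For that case I would fall back on the general semi-algebraic completeness \cite[Theorem~6.9]{DBLP:journals/jacm/PlatzerT20}, treating $a$ as a constant parameter. Alternatively, I would give the direct derivation. First, \irref{DG} adds the ghost $c' = -A^T c$ in a suitable form; more simply, use the quadratic Lyapunov-style invariant $\normi{w}$ replaced by $\sum_j w_j^2 = 0$. Its derivative is $2w^TAw$, which is bounded by $\lambda \sum_j w_j^2$ for a rational $\lambda \geq 2\|A\|$, so Darboux inequality reasoning yields $\sum_j w_j^2 \leq 0$ throughout, and hence $w = 0$.

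Since the application in \rref{thm:step_approx} only uses a fixed rational scaling $a = 1/(M-\eps)$, the linear route via \rref{thm:complete-invariants} suffices there.
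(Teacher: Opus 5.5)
Your main route is the paper's proof. The paper likewise computes $(ax+z-y)' = A(ax+z-y)$ and concludes that $y = ax+z$ is a (vectorial) Darboux invariant with constant cofactor matrix $A$, so your detours through \rref{thm:complete-invariants} (for rational $a$) and the quadratic Darboux inequality are valid but unnecessary. Two caveats. First, your remark that ``it suffices to handle each coordinate'' is misleading: a single $w_j = 0$ is not invariant by itself, since its derivative involves the other $w_k$, and this coupling is exactly why the vectorial form is needed. Second, in the quadratic fallback $\lambda$ must dominate $2\|A\|_2$ (e.g.\ $\lambda \geq \normi{A} + \|A\|_1$ suffices), because $w^TAw \leq \normi{A}\sum_j w_j^2$ can fail.
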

\begin{proof}
    Taking the Lie derivative of the difference gives
    \[(ax + z - y)' = aAx + Az - Ay = A(ax + z - y)\]
    Hence $ax + z = y$ forms a Darboux invariant \cite{DBLP:journals/jacm/PlatzerT20} of the system, and is therefore provable. 
\end{proof}

\begin{proof}[Proof of \rref{thm:step_approx}]
    Let $C(x_0), x' = Ax, [0, T], N$ be as given in the theorem. Let 
    \[P_m(t) = \sum_{i = 0}^m \frac{A^i}{i!}t^i\]
    be the $m$-th order Taylor approximation of $e^{At}$. By completeness of \dL for Taylor approximations \cite{compact_IVP} and the standard Taylor remainder theorem, there exists some $\eps : \N^+ \to \Q^+$ such that $\eps(m) = O(\delta^{m + 1})$ and the following is provable for all $m$:
    \[\normi{x} \leq 1 \land x = x_0 \land t = 0 \rightarrow \dbox{x' = Ax, t' = 1 \& t \leq \delta}{\normi{x - P_m(t)x_0} < \eps(m)}\]
    Let $D \approx e^{A\delta}$ be any rational approximation with error $O(\delta^3)$, let $D_t \coloneqq (I + \frac{D - I}{\delta}t)$. Then there exists $\beta = O(\delta^2), \eta = O(\delta^3)$ such that the following are provable (using sufficiently accurate Taylor approximations)
    \begin{align*}
        &\normi{x} \leq 1 \land x = x_0 \land t = 0 \rightarrow \dbox{x' = Ax, t' = 1 \& t \leq \delta}{\normi{x - D_tx_0} < \beta}\\
        &\normi{x} \leq 1 \land x = x_0 \land t = 0 \rightarrow \dbox{x' = Ax, t' = 1 \& t \leq \delta}{(t = \delta \rightarrow \normi{x - Dx_0} < \eta)}
    \end{align*}
    We now follow the proof sketch of \rref{thm:step_approx} and construct the sequence of affine matrices $D_i$ and error bounds $\eps_i$ via:
    \begin{itemize}
        \item $D_i(t) \coloneqq D^iD_t$. 
        \item $R \coloneqq \sup_{x \in C} \normi{x}$ and the radius bounds
         \[M_i \coloneqq \sup_{x \in D_i([0, \delta])C} \normi{x} + 1 \qquad\qquad M \coloneqq \max\left\{\max_{0 \leq i \leq N - 1} M_i, \sup_{x \in C} \normi{x}\right\}\]
        \item $\eps_{0} = R\beta, \eps_{i + 1} = \normi{D}\eps_i + M\eta$
    \end{itemize}
    The $\dL$ provability of these bounds follow directly from the sketch (assuming without loss of generality that $\eps_i < 1$), so it suffices to prove $\max_{0 \leq i \leq N - 1} \eps_i = O(\delta^2)$. Unfolding the recurrence gives
    \[
        \eps_i
        =
        \normi{D}^i\eps_0
        +
        M\eta\sum_{j=0}^{i-1}\normi{D}^j.
    \]
    The approximation on $\eta = O(\delta^3)$ gives:
    \[
        \normi{D - e^{A\delta}} = O(\delta^3) \implies \normi{D} = 1 + O(\delta)
    \] 
    Thus, the summation simplifies to
    \[
        \sum_{j=0}^{i-1}\normi{D}^j=O(N)=O(\delta^{-1}).
    \]
    Using $\eps_0=R\beta=O(\delta^2)$ and
    $\eta=O(\delta^3)$ therefore yields
    \[
        \max_{0\leq i<N}\eps_i
        =
        O(\beta+N\eta)
        =
        O(\delta^2).
    \]
    Establishing the global $O(\delta^2)$ error as desired. 
\end{proof}

\begin{proof}[Proof of \rref{cor:step_cert}]
    For $0 \leq i \leq N - 1$, let $M_i \coloneqq \sup_{x \in Z_i} \normi{x}$ and define:
    \[M \coloneqq \max\left\{\max_{0 \leq i \leq N - 1} M_i, \sup_{x \in C} \normi{x}\right\} \qquad\qquad\qquad\qquad \eta \coloneqq \frac{\eps}{4M}\]
    Similar to \rref{thm:step_approx}, we may establish some (possibly nonlinear) $\folr$ definable approximation (e.g. Taylor approximation) $D : [0, \delta] \to \R^{n \times n}$ such that the following is provable:
    \[\normi{x} \leq 1 \land x = x_0 \land t = 0 \rightarrow \dbox{x' = Ax, t' = 1 \& t \leq \delta}{\normi{x - D(t)x_0} \leq \eta}\]
    this is also the only place where the nonlinear fragment of $\dL$ is needed. The correctness of $Z_i$ can then be provably established as follows:
    \begin{itemize}
        \item Base case - This directly follows from the provable error of $D(t)$ and the (provable) superposition principle of linear ODEs. For $x \in C$, we may construct the auxiliary scaled variable $z = \frac{x}{M}$ satisfying $\normi{z} \leq 1$. Thus, by construction of $D(t)$, the following is provable
        \[C(x) \land x = x_0 \land z = \frac{x}{M} \land t = 0 \rightarrow \dbox{z' = Az, t' = 1 \& t \leq \delta}{\normi{z - \frac{D(t)x_0}{M}} \leq \eta}\]
        By the provable superposition principle of linear ODEs, it follows that $z = \frac{x}{M}$ along the ode $x' = Ax, z' = Az$. Thus, this establishes that $\normi{x - D(t)x_0} \leq M\eta$ for any initial condition $x_0 \in C$. By construction of $\eps$, the following arithmetic fact is necessarily valid
        \[t \in [0, \delta] \land C(x_0) \rightarrow B(D(t)x_0, \frac{3\eps}{4}) \subseteq Z_0\]
        Therefore this establishes the base case safety requirement:
        \[C(x) \land x = x_0 \land t = 0 \rightarrow \dbox{x' = Ax, t' = 1 \& t \leq \delta}{Z_0(x)}\]
        \item Inductive case - This is similar to the proof of \rref{thm:step_approx}, let $D \coloneqq D(\delta)$, then evolving from $t \mapsto t + \delta$ is exactly $x \mapsto e^{A\delta}x \approx Dx$. More formally, by a similar IVT argument to the proof of \rref{thm:step_approx}, the reachable set $Z_i$ is provably propagated to 
        \[Z_i \mapsto DZ_i + B(0, M_i\eta) \subseteq \phi(Z_i, \delta) + B(0, M_i\eta) + B(0, M_i\eta) \subseteq \phi(Z_i, \delta) + B(0, \eps) \subseteq Z_{i + 1}\]
        completing the inductive claim. In particular, the only arithmetic condition that needs to be proven is the containment 
            \[DZ_i + B(0, M_i\eta) \subseteq Z_{i + 1}\]
        which is in $\lra$. 
    \end{itemize}
\end{proof}

\end{document}